\newtheorem{lemma}{Lemma}
\newtheorem{proposition}{Proposition}
\newtheorem{theorem}{Theorem}
\newtheorem{corollary}{Corollary}
\theoremstyle{definition}
\newtheorem{definition}{Definition}
\numberwithin{lemma}{section}
\numberwithin{proposition}{section}
\numberwithin{theorem}{section}
\numberwithin{definition}{section}
\newcommand{\tr}{\operatorname{Tr}}
\newcommand{\id}{\mathrm{Id}}
\newcommand{\C}{\mathbb{C}}
\newcommand{\End}{\mathrm{End}}
\newcommand{\ket}[1]{\vert #1 \rangle}
\newcommand{\bra}[1]{\langle #1 \vert}
\newcommand{\Span}{\mathrm{Span}}
\newcommand{\cH}{\mathcal{H}}
\newcommand{\cS}{\mathcal{S}}
\newcommand{\bA}{\mathbf{A}}
\newcommand{\bB}{\mathbf{B}}
\newcommand{\bC}{\mathbf{C}}
\newcommand{\cM}{\mathcal{M}}
\newcommand{\cV}{\mathcal{V}}
\definecolor{mycolor}{rgb}{0.122, 0.435, 0.698}
\newcommand{\mybox}[1]{%
  \begin{tcolorbox}[colframe=mycolor,boxrule=0.2pt,arc=4pt,left=4pt,right=4pt,top=4pt,bottom=4pt,boxsep=2pt] #1
  \end{tcolorbox}%
}
\tikzset{
  tensor/.style={
    inner sep = 0.055cm,
    shape = circle,
    draw,
    fill
  },
  t/.style={
    inner sep = 0.03cm,
    shape = circle,
    draw,
    fill
  } 
    }
\title{MPS Stability and the Intersection Property}
\begin{document}

\author[1,2]{Jos\'e Garre-Rubio}
\author[3]{Alex Turzillo}
\author[1]{Andr\'as Moln\'ar}

\affil[1]{\small University of Vienna, Faculty of Mathematics, Oskar-Morgenstern-Platz 1, 1090 Vienna, Austria}
\affil[2]{\small Instituto de F\'isica Te\'orica, UAM/CSIC, C. Nicol\'as Cabrera 13-15, Cantoblanco, 28049 Madrid, Spain}
\affil[3]{\small Perimeter Institute for Theoretical Physics, 31 Caroline St N, Waterloo, ON N2L-2Y5, Canada}

\maketitle
\begin{abstract}
    We identify a property of the local tensors of matrix product states (MPS) that guarantees that their parent Hamiltonians satisfy the intersection property. The intersection property ensures that the ground space consists of MPS, with degeneracy bounded by the square of the bond dimension. The new local property, dubbed stability, generalizes (block) injectivity and is satisfied by the MPS tensors that construct the W state, domain wall superposition states, and their generalizations.
\end{abstract}

\section*{Introduction}

Matrix product states (MPS) \cite{PerezGarcia07} are a family of tensor network states that has shown success in a variety of applications to
strongly correlated quantum many-body systems, ranging from numerical calculations of phase diagrams in physics and quantum chemistry to analytical studies such as the phase classification of gapped symmetric spin chains \cite{Chen11,Schuch11}. MPS have also been extensively studied in machine learning \cite{stoudenmire17}.

By virtue of capturing the appropriate entanglement pattern -- an area law for the entanglement entropy -- MPS provide an efficient approximation of nondegenerate ground states of local and gapped one-dimensional Hamiltonians \cite{Hastings07A,Hastings07B}. A related property is that each MPS is associated with a local and gapped \emph{parent Hamiltonian} whose unique ground state is that MPS \cite{PerezGarcia07}. This correspondence between states and Hamiltonians provides a succinct formalism for studying one-dimensional models.

The correspondence, however, is limited to MPS with periodic boundary condition and does not cover states like the W state, which fall outside of this class but nevertheless can be described by MPS tensors with a nontrivial boundary condition (accordingly, the W state is not the unique ground state of a local and gapped Hamiltonian \cite{gioia2024}). To put it another way, the correspondence only holds for certain classes of MPS tensors: those which are injective \cite{PerezGarcia07,Molnar18A} or, more generally, block-injective (a.k.a. $G$-injective \cite{Schuch10}). As any periodic MPS can be brought into a canonical form \cite{PerezGarcia07}, it suffices to consider such tensors, whereas, for states like the W state, more general MPS tensors are necessarily involved.

Injective and block-injective MPS tensors satisfy a property called \emph{intersection}, which guarantees that the ground states of the parent Hamiltonian are precisely the corresponding MPS. The restriction to certain families of tensors and states is not just a technical assumption; it is a fundamental limitation: for generic tensors little can be said, and some questions about the ground space are undecidable \cite{Scarpa20}. Nevertheless, we find that the class of tensors for which the theory of intersection applies can be broadened. In this work we identify a property that generalizes block-injectivity and guarantees the intersection property. We dub this property \emph{stability}. This property is satisfied for important families of MPS tensors like those generating the W state, Dicke states, and domain wall superpositions.

We use stable MPS tensors to construct several families of MPS that satisfy the intersection property; for example, moving waves of an MPS tensor on a background of a different MPS, or superpositions of domain walls between different MPS. In each case, the ground space of the parent Hamiltonian contains not just the state of interest (e.g. the W state) but also a reference ground state (e.g. the state $|00\cdots\rangle$) above which the state of interest is an ``excitation.'' In light of a recent no-go theorem for realizing the W state as a unique ground state \cite{gioia2024}, this seems an unavoidable fact.

Both injectivity and stability are associated with a length which the interaction length of the parent Hamiltonian must be greater than in order for its space of ground states to be governed by the intersection property. At and below the stability length, there is no guarantee that the intersection property holds, and exotic behaviors such as exponential or polynomial growth of ground state degeneracy can occur \cite{Molnar25}. However, stability is only a sufficient but not necessary condition for intersection, and sometimes a parent Hamiltonian has a well-behaved ground space for anomalously low interaction length, as is the case for the AKLT model, which satisfies both stability and intersection on $2$ sites.

The structure of the paper is as follows. In \cref{sec:mps} we review the basics of MPS, their parent Hamiltonians, and the intersection property. In \cref{sec:stability} we introduce the concept of MPS stability, compare it to injectivity, and prove our main result -- that it implies intersection. In \cref{sec:examples} we discuss the aforementioned examples.

\section{MPS and their parent Hamiltonians}\label{sec:mps}


We consider two Hilbert spaces $\mathcal{H}$, called the \emph{physical space}, and $\mathbb{C}^D$, called the \emph{virtual space}, of finite dimension $d$ and $D$, respectively. Let $\{\ket{i}\in\mathcal{H} \mid i=0,\ldots, d-1\}$ form a basis of $\mathcal{H}$. An MPS tensor is an element $\bA = \sum_{i} A_i \otimes \ket{i}\in\cM_D\otimes \mathcal{H}$. The MPS with tensor $\bA$ on $k$ sites with boundary $X\in \mathcal{M}_D$ is the state
\begin{equation}
          \ket{X [\bA]^k } =\sum_{i_1,\cdots, i_k} \tr(X A_{i_1} \dots A_{i_{k}} )\,  \ket{i_1 \dots i_{k}}\in\mathcal{H}^{\otimes k}~.
\end{equation}
The \emph{physical MPS subspaces}
\begin{equation}
    \mathcal{S}_k(\bA) = \left\{ \ket{ X [\bA]^k } \, \Big | \, X\in \cM_D\right\}\subseteq\mathcal{H}^{\otimes k}
\end{equation}
will play a central role in our analysis. Notice that these spaces are invariant under conjugating the MPS tensor by an invertible matrix: $\mathcal{S}_k(\bA) = \mathcal{S}_k( T \bA T^{-1})$. In the following, we simply write $\cS_k$ when the MPS tensor $\bA$ is clear from context.

\begin{lemma}
    The physical MPS subspaces satisfy the relations
\begin{equation}\label{split}
    \cS_{k+\ell}\subseteq\cS_k\otimes\cS_\ell~,
\end{equation}
\begin{equation}\label{oneway}
    \cS_{k+1}\subseteq(\cS_k\otimes\cS_1)\cap(\cS_1\otimes\cS_k)=(\cS_k\otimes\cH)\cap(\cH\otimes\cS_k)~.
\end{equation}
\end{lemma}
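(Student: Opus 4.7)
The plan is to first establish \eqref{split} by a direct computation that resolves the internal trace, and then derive \eqref{oneway} by combining \eqref{split} with a boundary--site observation.

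For \eqref{split}, the key ingredient is the resolution of identity written as $\Tr(MN)=\sum_{\alpha,\beta}\Tr(E_{\beta\alpha}M)\,\Tr(E_{\alpha\beta}N)$, where $E_{\alpha\beta}=\ket{\alpha}\bra{\beta}$ is the standard basis of $\cM_D$. Applying this with $M=XA_{i_1}\cdots A_{i_k}$ and $N=A_{i_{k+1}}\cdots A_{i_{k+\ell}}$ and interchanging sums gives the factorization
\begin{equation*}
\ket{X[\bA]^{k+\ell}}=\sum_{\alpha,\beta}\ket{(E_{\beta\alpha}X)[\bA]^k}\otimes\ket{E_{\alpha\beta}[\bA]^\ell},
\end{equation*}
which visibly lies in $\cS_k\otimes\cS_\ell$. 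Since a spanning set of $\cS_{k+\ell}$ is contained in $\cS_k\otimes\cS_\ell$, this yields \eqref{split}.

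For \eqref{oneway}, the inclusion $\cS_{k+1}\subseteq(\cS_k\otimes\cS_1)\cap(\cS_1\otimes\cS_k)$ is immediate from \eqref{split} with $\ell=1$ and, respectively, with the roles of $k$ and $\ell$ swapped. The nontrivial part is the stated equality; one direction, $(\cS_k\otimes\cS_1)\cap(\cS_1\otimes\cS_k)\subseteq(\cS_k\otimes\cH)\cap(\cH\otimes\cS_k)$, is trivial because $\cS_1\subseteq\cH$.

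For the reverse inclusion, I first iterate \eqref{split} to note that $\cS_k\subseteq\cS_{k-1}\otimes\cS_1\subseteq\cH^{\otimes(k-1)}\otimes\cS_1$, and by the symmetric application $\cS_k\subseteq\cS_1\otimes\cH^{\otimes(k-1)}$. Consequently $\cH\otimes\cS_k\subseteq\cH^{\otimes k}\otimes\cS_1$ and $\cS_k\otimes\cH\subseteq\cS_1\otimes\cH^{\otimes k}$. Now pick a basis $\{\ket{e_c}\}$ of $\cS_k$ and expand any $\ket{\psi}\in\cS_k\otimes\cH$ uniquely as $\sum_c \ket{e_c}\otimes\ket{f_c}$. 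If $\ket{\psi}$ also lies in $\cH\otimes\cS_k$, then applying $\id^{\otimes k}\otimes P_{\cS_1}^{\perp}$ annihilates it, so $\sum_c \ket{e_c}\otimes P_{\cS_1}^{\perp}\ket{f_c}=0$, and linear independence of the $\ket{e_c}$ forces each $\ket{f_c}\in\cS_1$. Hence $\ket{\psi}\in\cS_k\otimes\cS_1$. The analogous argument, expanding $\ket{\psi}$ instead with a basis of $\cS_k$ on the right factor, gives $\ket{\psi}\in\cS_1\otimes\cS_k$, completing the equality.

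There is no substantial obstacle here: the only conceptual point is to recognize that \eqref{split} already constrains the extreme physical sites of any state in $\cS_k$ to lie in $\cS_1$, which is precisely what reduces $\cH$ to $\cS_1$ in the intersection.
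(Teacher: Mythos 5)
Your proof is correct and follows essentially the same route as the paper: the factorization of $\cS_{k+\ell}$ via a resolution of the identity $\sum_{\alpha\beta}E_{\beta\alpha}\otimes E_{\alpha\beta}$ inside the trace is identical, and your argument for the equality in \eqref{oneway} rests on the same key observation (that \eqref{split} forces the extreme sites of any state in $\cS_k$ to lie in $\cS_1$). The only cosmetic difference is that you prove the standard fact $(U\otimes Y)\cap(X\otimes V)=U\otimes V$ explicitly via a basis-and-projector argument, whereas the paper invokes it implicitly when intersecting with $\cS_1\otimes\cS_{k-1}\otimes\cS_1$.
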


\begin{proof}
Given an MPS state in $\cS_{k+\ell}$, write it as an element of $\cS_k\otimes\cS_\ell$:
\begin{equation}
    |X[\bA]^{k+\ell}\rangle=\sum_{\mu,\nu}|X[\bA]^ke_{\mu\nu}e_{\nu\mu}[\bA]^\ell\rangle = \sum_{\mu,\nu}|e_{\mu\nu}X[\bA]^k\rangle\otimes |e_{\nu\mu}[\bA]^\ell\rangle~,
\end{equation}
where $e_{\mu\nu}=|\mu\rangle\langle\nu|$ in a basis $\{|\mu\rangle\}$ of $\mathbb{C}^D$. This proves the inclusion \eqref{split}, which, in turn, straightforwardly implies the left-to-right inclusions \eqref{oneway}. To see the equality, note that $(\cS_k\otimes\cH)\cap(\cH\otimes\cS_k)\subseteq \cS_1\otimes\cS_{k-1}\otimes\cS_1$, and then use $(\cS_k\otimes\cH)\cap(\cS_1\otimes\cS_{k-1}\otimes\cS_1)=(\cS_k\otimes\cS_1)$ and $(\cH\otimes\cS_k)\cap(\cS_1\otimes\cS_{k-1}\otimes\cS_1)=(\cS_1\otimes\cS_k)$.
\end{proof}


An MPS tensor $\bA$ also has \emph{parent Hamiltonians}, for which the physical MPS subspace is (part of) the ground space, whenever this subspace is nonzero. For ease of notation, let us first define the cyclic translation operator $\tau_n$ on $n$ sites as the unitary operator that acts as $\tau_n \ket{i_1 \cdots i_n} = \ket{i_n i_1 \cdots i_{n-1}}$. 
\begin{definition}[Parent Hamiltonian]
    Let $\ell$, called the \emph{interaction length}, be such that $\mathcal{S}_\ell(\bA)$ is a proper subspace of $\mathcal{H}^{\otimes\ell}$, and let $h(\bA,\ell)$ be the orthogonal projector onto $\mathcal{S}_\ell(\bA)^{\perp}$. The $\ell$-local \emph{open boundary condition (OBC) parent Hamiltonian} of $\bA$ on $n$ sites is
    \begin{equation}
        H_n(\bA,\ell) = \sum_{i=0}^{n-\ell}\mathds{1}^{\otimes i} \otimes h(\bA,\ell) \otimes \mathds{1}^{ \otimes n-\ell-i}=\sum_{i=0}^{n-\ell} \tau_n^i(h(\bA,\ell) \otimes \mathds{1}^{ \otimes (n-\ell)})\tau_n^{-i}~,
    \end{equation}
    and the $\ell$-local \emph{periodic boundary condition (PBC) parent Hamiltonian} of $\bA$ on $n$ sites is
    \begin{equation}
        H^P_n(\bA,\ell) = \sum_{i=0}^{n-1} \tau_n^i(h(\bA,\ell) \otimes \mathds{1}^{ \otimes (n-\ell)})\tau_n^{-i}~.
    \end{equation}
\end{definition}
Note that the PBC Hamiltonian contains $\ell-1$ more terms than the OBC Hamiltonian. And unlike the OBC parent Hamiltonian, it is translation invariant:
\begin{equation}
    \tau_n H_n^P(\bA,\ell) = H_n^P(\bA,\ell) \tau_n~.
\end{equation}

Under the assumption that the MPS is nonvanishing, the parent Hamiltonians are \emph{frustration free}. Let us briefly review what this property means. Consider a Hamiltonian $H= \sum_i h_i$ such that the ground state energy of each term $h_i$ is non-negative: $h_i \ge 0$ and $\ker(h_i) \neq \{0\}$. Then, either of two cases holds:
\begin{enumerate}
    \item The ground state energy of $H$ is zero, in which case the ground space of $H$ is the intersection of the ground spaces of the $h_i$, i.e. $\cap_i GS(h_i) = GS(H)$, or
    \item The ground state energy is greater than zero and the intersection is trivial: $\cap_i GS(h_i) = \{0\}$.
\end{enumerate}
To see why this is true, note that, as a sum of non-negative terms $\langle\psi|h_i|\psi\rangle\ge 0$, the energy $\langle\psi|H|\psi\rangle$ of any state $|\psi\rangle$ is non-negative: $H\ge 0$. It vanishes if and only if each of these non-negative terms does. Thus $\ker H=\cap_i\ker h_i$. If the ground state energy $E_0$ vanishes, a state $|\psi\rangle\ne 0$ is a ground state of $H$ if and only if it is a ground state of each $h_i$. If $E_0>0$ then $\ker H=\cap_i\ker h_i$ contains only the zero vector. In the first case, the Hamiltonian $H$ is said to be \emph{frustration free} with respect to the terms $h_i$.

\begin{proposition}\label{parentFF}
    Suppose $\bA$ has nontrivial physical MPS subspace on $n$ sites: $S_n(\bA)\neq \{0\}$. Then, for any $\ell\le n$, the $\ell$-local OBC parent Hamiltonian on $n$ sites is frustration free with the MPS as ground states: 
    \begin{equation}
        \cS_n(\bA)\subseteq GS(H_n(\bA,\ell))~.
    \end{equation}
\end{proposition}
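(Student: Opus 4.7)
The plan is to show that every state in $\cS_n(\bA)$ is annihilated by each local term of $H_n(\bA,\ell)$; then frustration freeness and the claimed inclusion both follow immediately from the general principle recalled in the excerpt.

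First I would iterate the inclusion \eqref{split} from the previous lemma to obtain a three-way tensor split. For any $0\le i\le n-\ell$, applying \eqref{split} first with lengths $i$ and $n-i$, and then to the second factor with lengths $\ell$ and $n-\ell-i$, yields
\begin{equation}
    \cS_n(\bA)\subseteq \cS_i(\bA)\otimes\cS_\ell(\bA)\otimes\cS_{n-\ell-i}(\bA)\subseteq \cH^{\otimes i}\otimes\cS_\ell(\bA)\otimes\cH^{\otimes n-\ell-i}~.
\end{equation}
(The degenerate cases $i=0$ or $i=n-\ell$ are covered by a single application of \eqref{split}.) This puts any MPS state on $n$ sites into a form in which the middle $\ell$ sites lie in $\cS_\ell(\bA)$.

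Next I would observe that the $i$-th local term of the OBC parent Hamiltonian is the orthogonal projector onto $\cH^{\otimes i}\otimes\cS_\ell(\bA)^\perp\otimes\cH^{\otimes n-\ell-i}$, which is exactly the orthogonal complement (inside $\cH^{\otimes n}$) of the space $\cH^{\otimes i}\otimes\cS_\ell(\bA)\otimes\cH^{\otimes n-\ell-i}$ that contains $\cS_n(\bA)$. Hence each $\mathds{1}^{\otimes i}\otimes h(\bA,\ell)\otimes\mathds{1}^{\otimes n-\ell-i}$ annihilates every element of $\cS_n(\bA)$, and so does their sum $H_n(\bA,\ell)$.

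Finally, since each local term is a non-negative projector with nontrivial kernel, the general discussion preceding the proposition applies: $H_n(\bA,\ell)\ge 0$, and its kernel is the intersection of the local kernels. The assumption $\cS_n(\bA)\ne\{0\}$ together with the containment just established produces a nonzero vector in $\ker H_n(\bA,\ell)$, so the ground-state energy is zero. Thus $H_n(\bA,\ell)$ is frustration free and $\cS_n(\bA)\subseteq\ker H_n(\bA,\ell)=GS(H_n(\bA,\ell))$, as claimed. The only subtle point is the iterated use of \eqref{split} to achieve the three-way split; once that is in hand, the rest is a direct matching of subspace and orthogonal complement.
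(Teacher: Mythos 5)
Your proof is correct and follows essentially the same route as the paper: both establish $\cS_n(\bA)\subseteq\bigcap_i\bigl(\cH^{\otimes i}\otimes\cS_\ell(\bA)\otimes\cH^{\otimes n-\ell-i}\bigr)=\ker H_n(\bA,\ell)$ and then invoke the general frustration-freeness dichotomy together with $\cS_n(\bA)\ne\{0\}$. The only cosmetic difference is that the paper iterates the inclusion \eqref{oneway} while you apply \eqref{split} twice per position $i$; these are interchangeable consequences of the same lemma, and your version is, if anything, slightly more explicit about why the middle $\ell$ sites land in $\cS_\ell(\bA)$.
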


\begin{proof}
Repeatedly applying the inclusion \eqref{oneway} yields
\begin{equation}
    \cS_n(\bA)\subseteq\bigcap_{i=0}^{n-\ell}(\cH^{\otimes i}\otimes\cS_\ell(\bA)\otimes\cH^{\otimes n-\ell-i})=\bigcap_{i=0}^{n-\ell}\ker(\mathds{1}^{\otimes i} \otimes h(\bA,\ell) \otimes\mathds{1}^{ \otimes n-\ell-i})=\ker H_n(\bA,\ell)~.
\end{equation}
Since $\cS_n(\bA)$ is assumed to be nontrivial, this means the kernel is nontrivial, so it is the ground state space and the parent Hamiltonian $H_n(\bA,\ell)$ is frustration free.
\end{proof}

An analogous result can be stated for the PBC parent Hamiltonians. Let us first define the \emph{periodic MPS subspace} as the largest translation invariant subspace of the physical MPS subspace:
\begin{equation}
    \cS^P_n(\bA) = \bigcap_{i=0}^{n-1} \tau_n^i \cS_n(\bA) = \Span\{ \,\ket{\psi}\in \mathcal{S}_n(\bA) \,| \,\tau_n^i \ket{\psi}\in \mathcal{S}_n(\bA)~,\ i\in\{1, \ldots, n-1\}\,\}\subseteq\cS_n(\bA).
\end{equation}
Note that individual states in $\cS_n^P(\bA)$ need not be translation invariant. Then we have the result:

\begin{proposition}\label{pbc-parentFF}
    Suppose $\bA$ has nontrivial periodic MPS subspace on $n$ sites: $\cS_n^P(\bA)\ne\{0\}$. Then, for any $\ell\le n$, the $\ell$-local PBC parent Hamiltonian on $n$ sites is frustration free with the periodic MPS as ground states:
    \begin{equation}
        \cS_n^P(\bA)\subseteq GS(H^P_n(\bA,\ell))~.
    \end{equation}
\end{proposition}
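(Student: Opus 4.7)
The plan is to reduce the periodic case to the open boundary case already handled in \cref{parentFF}, exploiting the translation-invariance built into the definition of $\cS_n^P(\bA)$.

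First I would observe that the PBC Hamiltonian $H_n^P(\bA,\ell)$ is a sum of non-negative terms (each summand is a unitary conjugate of $h(\bA,\ell)\otimes\mathds{1}^{\otimes(n-\ell)}$, hence a projector). Thus the frustration-free dichotomy recalled in the paper applies, and it suffices to show that $\cS_n^P(\bA)$ lies in the common kernel of all $n$ terms, since by hypothesis this subspace is nontrivial and will therefore coincide with the ground space up to at least $\cS_n^P(\bA)$-sized degeneracy.

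The core step is then a single line. Fix $|\psi\rangle\in\cS_n^P(\bA)$ and any $i\in\{0,\ldots,n-1\}$. By the definition of the periodic MPS subspace, $\tau_n^{-i}|\psi\rangle\in\cS_n(\bA)$. By \cref{parentFF} (applied with $i=0$, i.e.\ looking at the first term of the OBC Hamiltonian), $\cS_n(\bA)\subseteq\ker\bigl(h(\bA,\ell)\otimes\mathds{1}^{\otimes(n-\ell)}\bigr)$. Hence
\begin{equation}
    \tau_n^{i}\bigl(h(\bA,\ell)\otimes\mathds{1}^{\otimes(n-\ell)}\bigr)\tau_n^{-i}\,|\psi\rangle=0~,
\end{equation}
and summing over $i$ shows $|\psi\rangle\in\ker H_n^P(\bA,\ell)$. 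Since $\cS_n^P(\bA)\ne\{0\}$, the kernel is nontrivial, so the ground state energy is zero and $H_n^P(\bA,\ell)$ is frustration free with $\cS_n^P(\bA)\subseteq GS(H_n^P(\bA,\ell))$.

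I do not foresee a real obstacle: the only conceptual point is recognizing that the ``wrap-around'' terms which distinguish the PBC Hamiltonian from the OBC one are exactly handled by the translation-invariance condition built into $\cS_n^P(\bA)$, so the OBC result can be invoked $n$ times rather than just $n-\ell+1$ times. No properties of the MPS tensor $\bA$ beyond what \cref{parentFF} already uses are needed.
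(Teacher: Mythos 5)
Your proof is correct and follows essentially the same route as the paper: both reduce the PBC case to the OBC result of \cref{parentFF} by using that $\tau_n^{-i}|\psi\rangle\in\cS_n(\bA)$ for every $i$ (the defining property of $\cS_n^P(\bA)$), so that each translated term of $H_n^P(\bA,\ell)$ annihilates $|\psi\rangle$. The only cosmetic difference is that the paper groups the $n$ translated terms into whole translated copies of $H_n(\bA,\ell)$ via $H^P_n(\bA,\ell)=\tfrac{1}{n-\ell+1}\sum_{i=0}^{n-1}\tau_n^i H_n(\bA,\ell)\tau_n^{-i}$, whereas you handle the summands one at a time.
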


\begin{proof}
    Note that $H^P_n(\bA,\ell)=\tfrac{1}{n-\ell+1}\sum_{i=0}^{n-1}\tau_n^i H_n(\bA,\ell)\tau_n^{-i}$, and therefore
    \begin{equation}
    \cS_n^P(\bA)=\bigcap_{i=0}^{n-1}\tau^i_n S_n(\bA)\subseteq \bigcap_{i=0}^{n-1}\ker\tau^i_n H_n(\bA,\ell)\tau_n^{-i}=\ker H^P_n(\bA,\ell)~.
    \end{equation}
    Since $\cS_n^P(\bA)$ is assumed to be nontrivial, these are frustration free ground states of $H_n^P(\bA,\ell)$.
\end{proof}

If the physical MPS subspace $\cS_n(\bA)$ or periodic MPS subspace $\cS_n^P(\bA)$ is trivial, the ground space of the parent Hamiltonian does not consist of MPS. If it is nontrivial, the ground space contains the MPS but might contain other states as well. If other ground states exist, the ground state degeneracy is not necessarily bounded as the system size $n$ grows. The following property of the MPS tensor, introduced in \cite{Schuch10}, guarantees that the ground space is precisely the space of MPS and nothing more:

\begin{definition}[Intersection Property]\label{intersection}
    An MPS tensor satisfies the intersection property on $k$ sites if
\begin{equation}
    \cS_{k+1}=(\cS_k\otimes\cH)\cap(\cH\otimes\cS_k)~,
\end{equation}
i.e., if the reverse of the inclusion \eqref{oneway} holds as well.
\end{definition}

\begin{proposition}\label{int-parent}
    Let $\bA$ be an MPS tensor that satisfies the intersection property for all $k$ with $\ell\leq k<n$. If $S_n(\bA)\ne \{0\}$, the ground space of the $\ell$-local OBC parent Hamiltonian at system size $n$ is exactly the physical MPS subspace:
    \begin{equation}
        GS(H_n(\bA,\ell))=\mathcal{S}_n(\bA)~.
    \end{equation}
    If $S_n^P(\bA)\ne\{0\}$, the ground space of the $\ell$-local PBC parent Hamiltonian at system size $n$ is exactly the periodic MPS subspace:
    \begin{equation}
        GS(H^P_n(\bA,\ell))=\cS_n^P(\bA)~.
    \end{equation}
    In particular, the ground state degeneracy is bounded above by $D^2$, independent of the system size $n$.
\end{proposition}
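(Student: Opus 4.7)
The plan is to prove the OBC statement first, by induction on the system size $n$, and then derive the PBC statement as a corollary by intertwining with translations. The degeneracy bound will then follow immediately from the parameterization $X\mapsto|X[\bA]^n\rangle$.

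For the OBC case, \cref{parentFF} already gives $\cS_n(\bA)\subseteq GS(H_n(\bA,\ell))$, so only the reverse inclusion requires work. I would induct on $n\ge\ell$. The base case $n=\ell$ is immediate since $H_\ell(\bA,\ell)$ is the single projector onto $\cS_\ell^\perp$. For the inductive step, the key algebraic observation is that the kernel factorizes as
\begin{equation}
    \ker H_n(\bA,\ell)=\bigl(\ker H_{n-1}(\bA,\ell)\otimes\cH\bigr)\cap\bigl(\cH\otimes\ker H_{n-1}(\bA,\ell)\bigr)~,
\end{equation}
because the terms of $H_n$ at positions $0,\dots,n-\ell-1$ act as the identity on the last site and together form $H_{n-1}\otimes\id$, while the terms at positions $1,\dots,n-\ell$ act as the identity on the first site and together form $\id\otimes H_{n-1}$. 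Applying the inductive hypothesis to both factors yields $\ker H_n(\bA,\ell)=(\cS_{n-1}\otimes\cH)\cap(\cH\otimes\cS_{n-1})$, and then the intersection property at $k=n-1$ (which is in the allowed range $\ell\le k<n$) collapses this to $\cS_n(\bA)$, closing the induction.

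For the PBC case, I would use the averaging identity already invoked in the proof of \cref{pbc-parentFF}, namely $H^P_n(\bA,\ell)=\tfrac{1}{n-\ell+1}\sum_{i=0}^{n-1}\tau_n^i H_n(\bA,\ell)\tau_n^{-i}$. Since all terms are positive semidefinite, its kernel is the intersection of the kernels of the translated OBC Hamiltonians:
\begin{equation}
    \ker H^P_n(\bA,\ell)=\bigcap_{i=0}^{n-1}\tau_n^i\,\ker H_n(\bA,\ell)\,\tau_n^{-i}=\bigcap_{i=0}^{n-1}\tau_n^i\,\cS_n(\bA)=\cS_n^P(\bA)~,
\end{equation}
using the OBC result in the middle equality and the definition of $\cS_n^P$ at the end.

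For the degeneracy bound, both $\cS_n(\bA)$ and $\cS_n^P(\bA)\subseteq\cS_n(\bA)$ are images of the linear map $\cM_D\to\cH^{\otimes n}$, $X\mapsto|X[\bA]^n\rangle$, so their dimensions are at most $\dim\cM_D=D^2$. I do not expect a serious obstacle here: the algebraic content is carried entirely by the kernel-factorization identity and the intersection hypothesis, and the only point that requires a little care is verifying that the induction range $\ell\le k<n$ of the hypothesis matches exactly what the argument consumes at each step.
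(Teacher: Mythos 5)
Your proof is correct and is essentially the paper's argument spelled out in detail: the paper's one-line proof ("same as \cref{parentFF,pbc-parentFF} but with equality") implicitly relies on exactly your inductive kernel factorization $\ker H_n=(\ker H_{n-1}\otimes\cH)\cap(\cH\otimes\ker H_{n-1})$ together with the intersection property at $k=n-1$, and on the same averaging identity for the PBC case. Your range bookkeeping ($k=\ell,\dots,n-1$) and the $D^2$ bound via the map $X\mapsto|X[\bA]^n\rangle$ are also exactly what the paper intends.
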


\begin{proof}
    The proof is the same as in \cref{parentFF,pbc-parentFF} but with equality instead of inclusion.
\end{proof}

We emphasize that the assumption that the MPS tensor satisfies intersection for \emph{all} $k$ with $\ell\leq k<n$ is crucial to \cref{int-parent}. One can construct examples \cite{Molnar25} where intersection holds for all $k\geq 3$, but not for $\ell = 2$, and consequently the $\ell=2$-body parent Hamiltonian on $n>2$ sites might have ground states beyond those in $\cS_n(\bA)$.

The following example shows that the intersection property on $k$ sites does not imply the intersection property on $k+1$ sites. Let $\bA$ be an MPS tensor that satisfies the intersection property for all $k\geq 2$ and has $\cS_k(\bA)\ne\{0\}$. Consider the MPS tensor $\bB$ defined by the matrices $B_i = A_i \otimes N$, where $N$ is a nilpotent matrix such that $N^4=0$. Then $\cS_k(\bB) = \cS_k(\bA)$ for all $k<4$, and thus $B$ satisfies intersection with $k=2$. However, $\cS_4(\bB) =0$, and thus $\bB$ does not satisfy intersection with $k=3$.

We see thus that conditions on the MPS tensor that guarantee the intersection property at all $k$ above a certain length are key for constructing a well-behaved parent Hamiltonian. In the following section we define such a condition and show that it generalizes previously known conditions (injectivity and $G$-injectivity). We illustrate that our condition is more general by a set of example in \cref{sec:examples} that fall outside of these previously studied classes.

\section{Stability: a sufficient condition for intersection}\label{sec:stability}

We now study conditions on the MPS tensor that guarantee the intersection property. To state them, we define the \emph{virtual MPS subspaces}
\begin{equation}
    \mathcal{V}_j(\bA)=\Span\{ \, A_{i_1} \dots A_{i_j} \, | \, i_1, \dots ,i_j \in \{0\dots d-1\} \,\}\subseteq\cM_D~.
\end{equation}
We will simply write $\cV_j$ when the MPS tensor $\bA$ is clear from context. Notice that these spaces may be concatenated according to
$\Span\{\cV_j\cdot\cV_k\}=\cV_{j+k}$. As the dimension of these spaces is at most $D^2$, where $D$ is the bond dimension, calculating them is fast: the construction of $\mathcal{V}_k$ takes $O(\log(k))$ time.


Let us start by reviewing injectivity and normality, before generalizing them below.

\begin{definition}[Injectivity]
    An MPS tensor $\bA$ is \emph{$j$-injective}, or \emph{injective at length $j$}, if
    \begin{equation}
        \cV_j(\bA)=\cM_D~.
    \end{equation}
    If $\bA$ is injective at some length $j$, we say it is \emph{normal} and the least such $j$ is called its \emph{injectivity length}.
\end{definition}

\begin{lemma}\label{injabovelength}
    Let $\bA$ be a normal MPS tensor with injectivity length $j$. Then $\bA$ is $k$-injective at all $k\geq j$.
\end{lemma}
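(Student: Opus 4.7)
The plan is to proceed by induction on $k \geq j$, with base case $k = j$ given by the hypothesis that $j$ is the injectivity length. It therefore suffices to prove the implication $\cV_k(\bA) = \cM_D \implies \cV_{k+1}(\bA) = \cM_D$.

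For the inductive step, I would first invoke the concatenation relation $\cV_{k+1} = \Span\{\cV_k \cdot \cV_1\}$ noted immediately after the definition of the virtual MPS subspaces. Under the inductive hypothesis $\cV_k = \cM_D$, this becomes
\begin{equation*}
    \cV_{k+1} = \Span\{M A_i : M \in \cM_D,\ i = 0, \ldots, d-1\}~.
\end{equation*}
A short linear-algebra calculation shows that for any fixed $X \in \cM_D$, the set $\{MX : M \in \cM_D\}$ is exactly the space of $D \times D$ matrices whose rows all lie in $\mathrm{rowspan}(X)$, since the $a$-th row of $MX$ is the $a$-th row of $M$ multiplied (on the right) by $X$, which is an arbitrary element of $\mathrm{rowspan}(X)$. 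Summing over $X = A_i$ then identifies $\cV_{k+1}$ with the space of all matrices whose rows lie in $R := \sum_i \mathrm{rowspan}(A_i)$.

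To close the induction, I would show that $R = \C^D$. Any length-$k$ product $A_{i_1} \cdots A_{i_k}$ has all its rows in $\mathrm{rowspan}(A_{i_k}) \subseteq R$, so every element of $\cV_k$ has its rows in $R$ as well. Because $\cV_k = \cM_D$ contains $\mathds{1}$, whose rows span $\C^D$, this forces $R = \C^D$, and consequently $\cV_{k+1} = \cM_D$.

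There is no serious obstacle here; the only conceptual step is recognizing that multiplying $\cM_D$ by a fixed matrix on one side produces a space characterized purely by row (or, symmetrically, column) spans. Once one sees this, the result follows because containing $\mathds{1}$ at length $k$ already forces the row spans of the generators $A_i$ to jointly fill $\C^D$.
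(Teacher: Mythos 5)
Your induction is correct, but the inductive step runs along a genuinely different track from the paper's. The paper closes the step in one line by a monotonicity trick: writing $\cV_k=\Span\{\cV_1\cV_{k-1}\}$ and using the trivial inclusion $\cV_{k-1}\subseteq\cM_D=\cV_k$, it gets $\cM_D=\cV_k\subseteq\Span\{\cV_1\cV_k\}=\cV_{k+1}$, and is done without ever computing what $\Span\{\cV_1\cM_D\}$ actually is. You instead compute $\Span\{\cM_D\cdot\cV_1\}$ explicitly, characterizing $\{MX: M\in\cM_D\}$ as the matrices with rows in $\mathrm{rowspan}(X)$ and reducing the claim to showing that $\sum_i\mathrm{rowspan}(A_i)=\C^D$, which you extract from $\mathds{1}\in\cV_k$; each of these steps checks out. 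The paper's route is shorter and stays entirely inside the concatenation algebra of the $\cV_j$'s; yours is more computational but isolates the concrete nondegeneracy fact that the $A_i$ jointly have full row span whenever some $\cV_k$ contains $\mathds{1}$ --- essentially the same fact the paper invokes later (in the proof that injective tensors are stable) to produce matrices $Y_i$ with $\sum_i A_iY_i=\mathds{1}$. So your argument is valid and arguably more informative, at the cost of brevity.
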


\begin{proof}
Suppose $\bA$ is $k$-injective. Then it is $k+1$-injective since
\begin{equation}
    \cM_D=\cV_k=\Span\{\cV_1\cV_{k-1}\}\subseteq\Span\{\cV_1\cM_D\}=\Span\{\cV_1\cV_k\}=\cV_{k+1}
\end{equation}
means $\cV_{k+1}=\cM_D$. By induction, $\bA$ is injective for all $k\ge j$.
\end{proof}

In Ref. \cite{Schuch10} it was shown that a normal MPS tensor satisfies intersection above its injectivity length; below we recover this result as a special case of \cref{PropIP}. MPS tensors consisting of blocks of normal MPS tensors are also known to satisfy intersection; our results below generalize this case too.

Our new condition generalizes normal and block-injective MPS and reads as follows:
\mybox{
\begin{definition}[Stability]\label{def:stable}
An MPS tensor $\bA$ is said to be \emph{left $j$-stable}, or \emph{left stable at length $j$}, if there are matrices $Y_i\in \mathcal{M}_D$, for $i\in \{0, \dots , d-1\}$, such that
\begin{itemize}
    \item $Y_i\mathcal{V}_{j+1}\subseteq \mathcal{V}_j$, for all $i$, and 
    \item $Z = \sum_i A_iY_i$ is such that $ZV = V$, $\forall V \in \mathcal{V}_{j+1}$.
\end{itemize}
Similarly, $\bA$ is said to be \emph{right $j$-stable} if there are matrices $Y_i\in \mathcal{M}_D$ such that
\begin{itemize}
    \item $\mathcal{V}_{j+1}Y_i \subseteq \mathcal{V}_j$, for all $i$, and
    \item $Z = \sum_i Y_i A_i$ is such that $VZ = V$, $\forall V \in \mathcal{V}_{j+1}$.
\end{itemize}
\end{definition}
}

Notice that the stability property is preserved under conjugation of the tensor: $\bA$ is stable if and only if  $T\bA T^{-1}$ is stable. When left versus right is clear from context or when a statement holds either way, we simply say $\bA$ is \emph{stable}.

\begin{lemma}\label{injective-stable}
    A $j$-injective MPS tensor is stable at length $j$.
\end{lemma}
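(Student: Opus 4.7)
The plan is to exploit \cref{injabovelength}, which upgrades $j$-injectivity to $k$-injectivity for every $k \geq j$. In particular, $\mathcal{V}_j(\bA) = \mathcal{V}_{j+1}(\bA) = \mathcal{M}_D$. With this identification, the two clauses of \cref{def:stable} collapse to a single, easily satisfied requirement.

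Under this hypothesis, the containment $Y_i \mathcal{V}_{j+1} \subseteq \mathcal{V}_j$ becomes $Y_i \mathcal{M}_D \subseteq \mathcal{M}_D$, which is automatic for any $Y_i \in \mathcal{M}_D$ since $\mathcal{M}_D$ is closed under matrix multiplication. Likewise, $ZV = V$ for every $V \in \mathcal{V}_{j+1} = \mathcal{M}_D$ is equivalent to $Z = \mathds{1}$ (specialize to $V = \mathds{1}$). So the task reduces to producing matrices $Y_0,\ldots,Y_{d-1}$ with $\sum_i A_i Y_i = \mathds{1}$, and then symmetrically for the right-stable case.

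To construct such $Y_i$, I would use that $\mathds{1} \in \mathcal{M}_D = \mathcal{V}_{j+1}(\bA)$ admits an expansion $\mathds{1} = \sum_{i_1, \ldots, i_{j+1}} c_{i_1 \cdots i_{j+1}}\, A_{i_1} A_{i_2} \cdots A_{i_{j+1}}$ for some coefficients $c_{i_1 \cdots i_{j+1}} \in \mathbb{C}$. Grouping by the leftmost index, set $Y_i := \sum_{i_2, \ldots, i_{j+1}} c_{i\, i_2 \cdots i_{j+1}}\, A_{i_2} \cdots A_{i_{j+1}}$, which lies in $\mathcal{V}_j \subseteq \mathcal{M}_D$. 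By construction, $\sum_i A_i Y_i = \mathds{1}$, establishing left $j$-stability. Right $j$-stability is proved identically by grouping the same expansion by the rightmost index and setting $Y_i := \sum_{i_1, \ldots, i_j} c_{i_1 \cdots i_j\, i}\, A_{i_1} \cdots A_{i_j}$.

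There is no real obstacle here, as the notion of stability was tailored so that injectivity should be a special case; the only small observation is that one must expand $\mathds{1}$ as a sum of $(j+1)$-fold products rather than $j$-fold ones, so that one factor $A_i$ can be peeled off at the boundary and the remaining $j$-fold tail becomes $Y_i$.
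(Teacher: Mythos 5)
Your proof is correct and follows essentially the same route as the paper: both observe that $\cV_j=\cM_D$ makes the containment condition $Y_i\cV_{j+1}\subseteq\cV_j$ vacuous, so the claim reduces to finding $Y_i$ with $\sum_i A_iY_i=\mathds{1}$ (resp.\ $\sum_i Y_iA_i=\mathds{1}$). The only difference is how those $Y_i$ are produced: the paper asserts their existence via a kernel/duality argument on the map $\bA:\C^D\to\C^D\otimes\cH$, whereas you construct them explicitly by expanding $\mathds{1}\in\cV_{j+1}$ into $(j+1)$-fold products and peeling off the boundary factor, which is equally valid and has the minor bonus of exhibiting $Y_i\in\cV_j$.
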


\begin{proof}
    Let $\bA$ be $j$-injective: $\cV_j=\cM_D$. Thus the condition $Y_i\cV_{j+1}\subseteq\cV_j$ is satisfied, for any choice of $Y_i$. The other condition is satisfied if the matrices $Y_i$ are such that $\sum_iA_iY_i=\mathds{1}$, i.e. they form an inverse to the map $\bA:\mathbb{C}^D\rightarrow\mathbb{C}^D\otimes\cH$. Normality guarantees that this inverse exists: if it did not, there would be a vector in the kernel of $\cV_1^T$, and so also in the kernel of $\cV_j^T$, in contradiction with $\cV_j=\cM_D$. This completes the proof of left stability. A similar argument shows right stability.
\end{proof}

Like injectivity, stability is preserved under concatenation:

\begin{lemma}\label{prop:stability_grow}
  If an MPS tensor is left (right) $j$-stable, it is left (right) stable at all lengths $k\geq j$.
\end{lemma}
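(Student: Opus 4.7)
The plan is to prove the statement by induction on $k$, with base case $k=j$ given by hypothesis. For the inductive step, I would show that if $\bA$ is left $j$-stable with witnesses $\{Y_i\}$, then the very same matrices $\{Y_i\}$ witness left $(j+1)$-stability. This is the key claim: \emph{no new matrices need to be constructed}, because the two defining conditions ``transport'' up one length via the concatenation identity $\cV_{j+2}=\Span\{\cV_{j+1}\cdot\cV_1\}$.

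First I would verify the containment condition. Any element of $\cV_{j+2}$ is a linear combination of products $VA_k$ with $V\in\cV_{j+1}$ and $A_k\in\cV_1$. Applying $Y_i$ on the left gives
\begin{equation}
Y_i V A_k \in (Y_i\cV_{j+1})\cdot\cV_1 \subseteq \cV_j\cdot\cV_1 \subseteq \cV_{j+1}~,
\end{equation}
using the hypothesis $Y_i\cV_{j+1}\subseteq\cV_j$ and the concatenation relation. By linearity, $Y_i\cV_{j+2}\subseteq\cV_{j+1}$.

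Next I would verify the identity-action condition. For the same decomposition $V'=VA_k\in\cV_{j+2}$, with $Z=\sum_i A_iY_i$,
\begin{equation}
ZV' = ZVA_k = VA_k = V'~,
\end{equation}
where the middle equality uses the inductive hypothesis $ZV=V$ for $V\in\cV_{j+1}$. Again extending by linearity finishes the step. Induction on $k$ gives left stability at every $k\geq j$.

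The right-stable case is entirely analogous, using instead the decomposition $\cV_{j+2}=\Span\{\cV_1\cdot\cV_{j+1}\}$: for $V'=A_kV$, one checks $V'Y_i=A_k(VY_i)\in\cV_1\cdot\cV_j\subseteq\cV_{j+1}$ and $V'Z=A_kVZ=A_kV=V'$. I do not foresee a real obstacle here; the only thing to be careful about is using the correct side's decomposition of $\cV_{j+2}$ (left concatenation for the right-stable case, right concatenation for the left-stable case), so that the hypothesis on $Y_i$ can be invoked on the appropriate side.
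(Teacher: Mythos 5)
Your proof is correct and rests on the same key observation as the paper's: the original witnesses $Y_i$ continue to work at every larger length because of the concatenation relation $\Span\{\cV_a\cdot\cV_b\}=\cV_{a+b}$. The paper argues directly for arbitrary $k\ge j$ by splitting $\cV_{k+1}=\Span\{\cV_{j+1}\cdot\cV_{k-j}\}$, whereas you peel off one site at a time and induct, but this is only a difference in bookkeeping, not in substance.
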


\begin{proof}
Let $\bA$ be a left $j$-stable MPS tensor, and let $k\ge j$. Then
\begin{equation}
    Y_i\cV_{k+1}=Y_i\cdot\Span\{\cV_{j+1}\cV_{k-j}\}=\Span\{Y_i\cV_{j+1}\cV_{k-j}\}\subseteq\Span\{\cV_j\cV_{k-j}\}=\cV_k~.
\end{equation}
Since $\cV_{k+1}=\Span\{\cV_{j+1}\cV_{k-j}\}$, write $V\in\cV_{k+1}$ as $V=\sum_\mu U_\mu W_\mu$ for $U\in\cV_{j+1}$ and $W_\mu\in \cV_{k-j}$. Then
\begin{equation}
    ZV=\sum_\mu ZU_\mu W_\mu=\sum_\mu U_\mu W_\mu=V~.
\end{equation}
A similar argument holds for right stable MPS tensors.
\end{proof}

Before proving our main result in \cref{PropIP}, we demonstrate an intermediate result. The following \cref{prop:Opmove} shows on how a left (right) stable tensor implies the existence of a physical operator that pushes the virtual boundaries to the left (right):

\mybox{
\begin{proposition}\label{prop:Opmove}
    Let $\bA$ be a left $j$-stable MPS tensor. There is an operator $O\in\End(\mathcal{H}^{\otimes j+1})$ such that $O \cdot [\bA]^{j+1} = [\bA]^{j+1}$ and, for every $M \in \mathcal{M}_D$, a matrix $N \in \mathcal{M}_D$ with $ O \cdot \left ( \bA M [\bA]^j\right ) = N [\bA]^{j+1}$:
    \begin{equation}\label{optrans}
    \begin{tikzpicture}[baseline=1pt]
    \foreach \x in {0,1,3}{
    \node[tensor] at (\x,0){};
    \draw[thick]{} (\x-0.5,0)--(\x+0.5,0);
    \draw[thick]{} (\x,0)--(\x,0.9);
    }
    \node[] at (2,0) {$\cdots$};
    \node[tensor,blue,label=below:$M$] at (0.5,0) {};
    \draw[rounded corners, fill=gray, opacity=1]  (-0.2,0.3) rectangle (3.2,0.7);
    \node[] at (1.5,0.5) {$O$};
    \end{tikzpicture}
    = 
    \begin{tikzpicture}[baseline=1pt]
    \foreach \x in {0,1,3}{
    \node[tensor] at (\x,0){};
    \draw[thick]{} (\x-0.5,0)--(\x+0.5,0);
    \draw[thick]{} (\x,0)--(\x,0.3);
    }
    \node[] at (2,0) {$\cdots$};
    \draw[thick]{} (-1,0)--(-0.5,0);
    \node[tensor,red,label=below:$N$] at (-0.5,0) {};
    \draw [decorate,decoration={brace,amplitude=5pt,raise=1ex}] (-0.1,0.2) -- (3.1,0.2) node[midway,yshift=1.5em]{\small $j+1$};
    \end{tikzpicture}\  .
    \end{equation}
\end{proposition}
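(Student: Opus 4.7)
The strategy will be to build $O$ explicitly out of the stability data $\{Y_i\}$, so that its action on the physical legs implements the identity $V=\sum_i A_iY_iV$ that encodes left $j$-stability. The key preliminary observation is that for any index tuple $\mathbf{k}=(k_1,\ldots,k_{j+1})$ the matrix $Y_{i_1}A_{k_1}\cdots A_{k_{j+1}}$ lies in $\mathcal{V}_j$ by the first bullet of the stability definition, so there exist scalars $\gamma^{i_1,\mathbf{k}}_{i_2,\ldots,i_{j+1}}$ (not unique in general) with
\begin{equation}
Y_{i_1}A_{k_1}\cdots A_{k_{j+1}}=\sum_{i_2,\ldots,i_{j+1}}\gamma^{i_1,\mathbf{k}}_{i_2,\ldots,i_{j+1}}\,A_{i_2}\cdots A_{i_{j+1}}.
\end{equation}
Any choice of such scalars will do; I would just fix one.

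With this data in hand I would define
\begin{equation}
O=\sum_{\mathbf{i},\mathbf{k}}\gamma^{i_1,\mathbf{k}}_{i_2,\ldots,i_{j+1}}\,\ket{k_1\cdots k_{j+1}}\bra{i_1\cdots i_{j+1}}\in\End(\mathcal{H}^{\otimes j+1}),
\end{equation}
and verify the two conditions by a direct computation. Applying $O$ to $[\bA]^{j+1}=\sum_{\mathbf{i}}A_{i_1}\cdots A_{i_{j+1}}\otimes\ket{\mathbf{i}}$, the matrix coefficient of each $\ket{\mathbf{k}}$ collapses, using the displayed identity above, to $\sum_{i_1}A_{i_1}Y_{i_1}A_{k_1}\cdots A_{k_{j+1}}$, which equals $A_{k_1}\cdots A_{k_{j+1}}$ by the second bullet of stability (since $A_{k_1}\cdots A_{k_{j+1}}\in\mathcal{V}_{j+1}$ and $Z$ fixes $\mathcal{V}_{j+1}$). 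Thus $O\cdot[\bA]^{j+1}=[\bA]^{j+1}$. The same computation applied to $\bA M[\bA]^j$, with $M$ sliding past the $i_2,\ldots,i_{j+1}$ summation by linearity, produces the coefficient $\sum_{i_1}A_{i_1}M\,Y_{i_1}A_{k_1}\cdots A_{k_{j+1}}=N A_{k_1}\cdots A_{k_{j+1}}$ with the explicit boundary
\begin{equation}
N=\sum_i A_i M Y_i.
\end{equation}

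The only delicate point I anticipate is the non-uniqueness of the coefficients $\gamma$ when $\mathcal{V}_j$ is a proper subspace of $\mathcal{M}_D$: different selections give different operators $O$. This is not a real obstruction because the $\gamma$'s enter both verifications only through the intrinsically defined combination $\sum\gamma^{i_1,\mathbf{k}}_{i_2,\ldots,i_{j+1}}A_{i_2}\cdots A_{i_{j+1}}=Y_{i_1}A_{k_1}\cdots A_{k_{j+1}}$, which is independent of the choice. Hence every valid selection yields an $O$ satisfying both bullets of the proposition, and the formula for $N$ in terms of $M$ is independent of this freedom.
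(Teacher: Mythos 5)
Your proposal is correct and follows essentially the same route as the paper's proof: you construct $O$ from (a choice of) expansion coefficients expressing $Y_{i}A_{k_1}\cdots A_{k_{j+1}}\in\cV_j$ in terms of length-$j$ products, obtain $N=\sum_i A_i M Y_i$, and recover the invariance $O\cdot[\bA]^{j+1}=[\bA]^{j+1}$ from $Z=\sum_i A_iY_i$ acting as the identity on $\cV_{j+1}$, exactly as in the paper. Your remark that the non-uniqueness of the coefficients is harmless is a welcome extra observation that the paper leaves implicit.
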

}

\begin{proof}
Given a left $j$-stable MPS tensor $\bA$, we construct $O$ explicitly. Since $Y_i\cV_{j+1}\subseteq\cV_j$, we can write
\begin{equation}
    Y_iA_{i_0}\cdots A_{i_j}=\sum_{i_1',\ldots, i_j'}O(i)_{i_0\cdots i_j}^{i_1'\ldots i_j'}A_{i_1'}\cdots A_{i_j'}~.
\end{equation}
Interpret these coefficients as matrix elements to construct the operator
\begin{equation}
    O=\sum_{\substack{i_0,\ldots,i_j \\ i,i_1',\ldots,i_j'}}O(i)_{i_0\cdots i_j}^{i_1'\cdots i_j'}|i_0,\cdots, i_j\rangle\langle i,i_1',\ldots, i_j'| : \mathcal{H}^{\otimes j+1} \mapsto \mathcal{H}^{\otimes j+1}~ .
\end{equation}
Then we can check  $ O \cdot \left ( \bA M [\bA]^j\right ) = N [\bA]^{j+1}$:
\begin{align}\begin{split}
    O\cdot\sum_{i_0',\ldots,i_j'}A_{i_0'}MA_{i_1'}\cdots A_{i_j'}|i_0'\cdots i_j'\rangle
    &=\sum_{i_0,\ldots,i_j}\sum_{i_0',\ldots,i_j'}O(i_0')^{i_1'\cdots i_j'}_{i_0\cdots i_j}A_{i_0'}MA_{i_1'}\cdots A_{i_j'}|i_0\cdots i_j\rangle\\
    &=\sum_{i_0,\ldots,i_j}\left(\sum_{i_0'}A_{i_0'}MY_{i_0'}\right)A_{i_0}A_{i_1}\cdots A_{i_j}|i_0\cdots i_j\rangle~,
\end{split}\end{align}
so $N=\sum_iA_iMY_i$. In particular, the choice $M=\mathds{1}$ results in $N=\sum_iA_iY_i=Z$. Then from the stability condition $ZV=V$ for all $V\in\cV_{j+1}$, it follows that $O\cdot[\bA]^{j+1}=[\bA]^{j+1}$.
\end{proof}

An analogous result holds for right $j$-stable MPS tensors, where $O$ acts as $ O \cdot \left ( [\bA]^j M {\bA}\right ) =   [\bA]^{j+1} N $. The proof of this statement follows the same logic as the left version. For normal tensors with injectivity length $j$ an operator that satisfies Eq.~\eqref{optrans} is $O = \sum_{i,i'} \tr \{ A_{i_0}\cdots A_{i_j}A^{-1}_{i'_j\cdots i'_1} Y_{i'_0}\} \ket{i}\bra{i'}$ where $A^{-1}$ is the inverse of $j$ A tensors and $Y$ is the tensor satisfying $\sum A_iY_i=\id$, see \cref{injective-stable}. Beyond the normal case, we leave an explicit formula for $O$ in terms of the stability matrices $Y_i$ for future work.

Finally we state our main result:
\mybox{
\begin{theorem}\label{PropIP}
If an MPS tensor $\bA$ is stable at length $j$, then it also satisfies the intersection property at all lengths $k\geq j+1$:
\begin{equation}
 (\mathcal{S}_k\otimes \mathcal{H}) \cap (\mathcal{H}\otimes \mathcal{S}_k) = \mathcal{S}_{k+1} \ .    
\end{equation}

\end{theorem}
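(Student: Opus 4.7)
The plan is to prove the nontrivial inclusion $(\cS_k\otimes\cH)\cap(\cH\otimes\cS_k)\subseteq\cS_{k+1}$ (the reverse is \eqref{oneway}) by applying the operator $O$ of \cref{prop:Opmove} to the first $j+1$ sites of a candidate state $|\psi\rangle$ in the intersection, and evaluating the result in two different ways. I assume left $j$-stability; the right case is symmetric.

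First I would unpack the two hypotheses. From $|\psi\rangle\in\cH\otimes\cS_k$ there exist matrices $X_{i_0}\in\cM_D$ such that
\[|\psi\rangle=\sum_{i_0,\ldots,i_k}\tr(X_{i_0}A_{i_1}\cdots A_{i_k})\,|i_0\cdots i_k\rangle,\]
a representation with an $i_0$-dependent boundary sitting between sites $0$ and $1$. From $|\psi\rangle\in\cS_k\otimes\cH$, combined with the split inclusion $\cS_k\subseteq\cS_{j+1}\otimes\cH^{\otimes k-j-1}$ (from \eqref{split}, valid since $k\geq j+1$), I get the second fact $|\psi\rangle\in\cS_{j+1}\otimes\cH^{\otimes k-j}$.

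Then I would apply $(O\otimes\mathds{1}^{\otimes k-j})$ and equate two evaluations. On the one hand, the first property of $O$ in \cref{prop:Opmove} ($O\cdot[\bA]^{j+1}=[\bA]^{j+1}$) means $O$ fixes every element of $\cS_{j+1}$; combined with $|\psi\rangle\in\cS_{j+1}\otimes\cH^{\otimes k-j}$, this yields $(O\otimes\mathds{1}^{\otimes k-j})|\psi\rangle=|\psi\rangle$. On the other hand, using the explicit formula for $O$ from the proof of \cref{prop:Opmove} applied to the first representation of $|\psi\rangle$, one computes directly: fixing $i_{j+1},\ldots,i_k$ and writing $M=A_{i_{j+1}}\cdots A_{i_k}$, the state on sites $0,\ldots,j$ is $\sum_{i_0,\ldots,i_j}\tr(X_{i_0}A_{i_1}\cdots A_{i_j}M)|i_0\cdots i_j\rangle$; applying $O$ and invoking the defining identity $\sum_{i_1',\ldots,i_j'}O(i)_{i_0\cdots i_j}^{i_1'\cdots i_j'}A_{i_1'}\cdots A_{i_j'}=Y_iA_{i_0}\cdots A_{i_j}$ together with the cyclicity of the trace, this transforms into $\sum_{i_0,\ldots,i_j}\tr(WA_{i_0}A_{i_1}\cdots A_{i_j}M)|i_0\cdots i_j\rangle$, with a single fixed matrix $W:=\sum_iX_iY_i$ now acting as a genuine left boundary. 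Summing over $i_{j+1},\ldots,i_k$ yields $(O\otimes\mathds{1}^{\otimes k-j})|\psi\rangle=|W[\bA]^{k+1}\rangle$. Equating the two evaluations produces $|\psi\rangle=|W[\bA]^{k+1}\rangle\in\cS_{k+1}$.

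The main obstacle is the direct computation in the second evaluation: one must carefully track how $O$ absorbs the $i_0$-dependent boundary $X_{i_0}$ into a single matrix $W=\sum_iX_iY_i$ that does not depend on the environment $M$. This works because $O$ is constructed precisely to commute the stability matrices $Y_i$ past $\bA$-tensors (the condition $Y_i\cV_{j+1}\subseteq\cV_j$), so all $i_0$-dependence collapses into $\sum_iX_iY_i$ after trace cyclicity; the residual identity $ZV=V$ is what underlies the trivial-action step. Once these two roles of stability are disentangled, the two-evaluation strategy yields the theorem immediately.
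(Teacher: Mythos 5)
Your proof is correct and follows essentially the same route as the paper's: both evaluate the action of the operator $O$ from \cref{prop:Opmove} on a state in the intersection in two ways, once showing it acts trivially (using membership in $\cS_k\otimes\cH$ and $O\cdot[\bA]^{j+1}=[\bA]^{j+1}$) and once showing it outputs an element of $\cS_{k+1}$ (using membership in $\cH\otimes\cS_k$ and the boundary-pushing property). The only cosmetic difference is that you apply the length-$(j+1)$ operator directly for all $k$ via the splitting inclusion \eqref{split} and absorb the index-dependent boundary into $W=\sum_i X_iY_i$ by an explicit computation, whereas the paper first upgrades stability to length $k-1$ via \cref{prop:stability_grow}, applies $O$ on $k$ sites, and invokes the decomposition \eqref{eq:equicdec}; both are valid.
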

}

\begin{proof} Due to \cref{prop:stability_grow}, it is enough to consider $k=j+1$. By Eq. \eqref{oneway}, a state $\ket{\Psi}$ in the intersection lies in both $\cS_k\otimes\cS_1$ and $\cS_1\otimes\cS_k$. Let $O\in \End(\mathcal{H}^{\otimes k})$ be an operator as in \cref{prop:Opmove}. Then $\ket{\Psi} \in \cS_k\otimes\cS_1$ means that the state is fixed by the action of $O\otimes\mathds{1}$, since $O\cdot[\bA]^k=[\bA]^k$. On the other hand, $\ket{\Psi}\in \cS_1\otimes\cS_k$ means that we can write the state as $\ket{\Psi} = \sum_q \ket{X_q\bA} \ket{X_q'[\bA]^k}$, and thus, as there are matrices $M_p$ and $V_p$ such that the equality
\begin{equation}\label{eq:equicdec}
    \sum_q \ 
    \begin{tikzpicture}[baseline=-1mm, yscale=0.5, font=\scriptsize, xscale = 0.7]
        \draw (-0.7, 0) rectangle (0.7,-1); 
        \draw (2.3,0) rectangle (5.7,-1);
        \foreach \x in {0,3,5}{
            \node[tensor] (t) at (\x,0) {};
            \node at (\x,-0.5) {$A$}; 
            \draw (t)--++(0,0.7);
        }
        \node[fill=white] at (4,0) {$\ldots$};
        \node[tensor] at (0.7,-0.5) {};
        \node at (1.2,-0.5) {$X_q$};
        \node[tensor] at (2.3,-0.5) {};
        \node at (1.8,-0.5) {$X_q'$};
    \end{tikzpicture} =  
    \sum_p \ 
    \begin{tikzpicture}[baseline=-1mm, yscale=0.5, font=\scriptsize, xscale = 0.7]
        \draw (-0.7, 0) rectangle (4.7,-1); 
        \foreach \x in {0,2,4}{
            \node[tensor] (t) at (\x,0) {};
            \node at (\x,-0.5) {$A$}; 
            \draw (t)--++(0,0.7);
        }
        \node[fill=white] at (3,0) {$\ldots$};
        \node[tensor] at (1,0) {};
        \node at (1,0.5) {$M_p$};
        \node[tensor] at (1,-1) {};
        \node at (1,-0.5) {$V_p$};
    \end{tikzpicture} 
\end{equation}
holds, $\ket{\Psi} = \sum_p|V_p[\bA]M_p[\bA]^k\rangle$. Applying $O\otimes\mathds{1}$ to this expression and using \cref{prop:Opmove}, we find
\begin{equation}
    \ket{\Psi} = \sum_p|V_p[\bA]M_p[\bA]^k\rangle=(O\otimes\mathds{1})\cdot\sum_p|V_p[\bA]M_p[\bA]^k\rangle=\sum_p|V_pN_p[\bA]^{k+1}\rangle\in\cS_{k+1}
\end{equation}
for some matrices $N_p$.
\end{proof}

Note that \cref{PropIP} together with \cref{injective-stable} recovers the previously known result that normal MPS tensors have the intersection property as a special case. A $j$-injective MPS tensor is $j$-stable and therefore satisfies intersection at all lengths $k\ge j+1$.


\cref{PropIP} can be interpreted in terms of parent Hamiltonians, using \cref{int-parent}. Let us first state one more condition on the local tensor:
\begin{definition}[Nilpotent MPS tensor]
    An MPS tensor $\bA$ is said to be nilpotent if its matrices generate a nilpotent algebra, i.e. if there is a length $n$ for which $\cV_n(\bA)=\{0\}$.
\end{definition}
Note that a non-nilpotent MPS tensor has $S_n(\bA)\ne\{0\}$ at all $n$ and therefore meets the assumptions of \cref{parentFF}; thus, its OBC parent Hamiltonian is frustration free. Therefore we have the statement

\mybox{
\begin{corollary}\label{corol}
    Let $\bA$ be an MPS tensor that is (left or right) stable at length $j$, and let $\ell\ge j+1$. If $\bA$ is non-nilpotent, the ground space of the $\ell$-local OBC parent Hamiltonian at system size $n$ is exactly the physical MPS subspace:
    \begin{equation}
        GS(H_n(\bA,\ell))=\mathcal{S}_n(\bA)~.
    \end{equation}
    If $\cS_n^P(\bA)\ne\{0\}$, the ground space of the $\ell$-local PBC parent Hamiltonian at system size $n$ is exactly the periodic MPS subspace:
    \begin{equation}
        GS(H^P_n(\bA,\ell))=\cS_n^P(\bA)~.
    \end{equation}
    In particular, the ground state degeneracies are bounded above by $D^2$.
\end{corollary}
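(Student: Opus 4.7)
The plan is to derive this corollary as a direct consequence of Theorem \ref{PropIP} and Proposition \ref{int-parent}: the former promotes stability at a single length into the intersection property at all larger lengths, while the latter then translates the intersection property into a statement about ground spaces of parent Hamiltonians.

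First, I would observe that, since $\bA$ is stable at length $j$, Theorem \ref{PropIP} guarantees the intersection property $\cS_{k+1} = (\cS_k \otimes \cH) \cap (\cH \otimes \cS_k)$ at every length $k \geq j+1$. Combined with the hypothesis $\ell \geq j+1$, this means that intersection holds for all $k$ in the range $\ell \leq k < n$, which is precisely the hypothesis needed to invoke Proposition \ref{int-parent}.

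To close out the OBC half of the corollary, I need to check the remaining hypothesis of Proposition \ref{int-parent}, namely that $\cS_n(\bA) \neq \{0\}$. This is where the non-nilpotency assumption enters: by definition, non-nilpotence of $\bA$ means $\cV_n(\bA) \neq \{0\}$ for every $n$, so there exist indices $i_1, \ldots, i_n$ with $A_{i_1}\cdots A_{i_n} \neq 0$. Nondegeneracy of the trace pairing on $\cM_D$ then yields some $X \in \cM_D$ with $\tr(X A_{i_1}\cdots A_{i_n}) \neq 0$, so $\ket{X[\bA]^n} \neq 0$ and hence $\cS_n(\bA) \neq \{0\}$. Proposition \ref{int-parent} then gives $GS(H_n(\bA,\ell)) = \cS_n(\bA)$. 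The PBC half is even more immediate: the hypothesis $\cS_n^P(\bA) \neq \{0\}$ is assumed, so Proposition \ref{int-parent} yields $GS(H_n^P(\bA,\ell)) = \cS_n^P(\bA)$ directly.

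Finally, the bound $D^2$ on the ground-state degeneracy follows from the observation that both $\cS_n(\bA)$ and $\cS_n^P(\bA) \subseteq \cS_n(\bA)$ are images of the linear map $X \mapsto \ket{X[\bA]^n}$ from $\cM_D$ to $\cH^{\otimes n}$, so their dimension is at most $\dim \cM_D = D^2$. I do not expect any real obstacle here: the entire argument is bookkeeping that combines the two earlier results, with the only nontrivial step being the short argument that non-nilpotency passes from the virtual to the physical MPS subspace.
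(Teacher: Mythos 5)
Your proposal is correct and follows exactly the route the paper intends: \cref{PropIP} (with \cref{prop:stability_grow}) gives intersection at all $k\ge j+1\le\ell$, non-nilpotency gives $\cS_n(\bA)\ne\{0\}$ via a nonzero coefficient $\tr(XA_{i_1}\cdots A_{i_n})$, and \cref{int-parent} then yields both equalities and the $D^2$ bound. The paper treats this as immediate and does not even write out the non-nilpotency step, which you supply correctly.
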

}


\section{Examples}\label{sec:examples}

In this section we study tensors that satisfy the stability condition but are not injective. The first thing we analyze is how these tensors diverge from the block-injective form of Ref.~\cite{PerezGarcia07}.

\subsection{Boundary conditions and canonical form}\label{sec:canonical}

Let us analyze the form of the MPS wavefunctions $|X[\bA]^k\rangle\in\cS_n(\bA)$, which we have seen to appear as ground states of the MPS parent Hamiltonians.

For periodic boundary condition ($X=\mathds{1}$) MPS wavefunctions, there is a well-known \emph{canonical form theorem} \cite{PerezGarcia07}, which states that, for any MPS tensor $\bA$, there is another MPS tensor $\bB$ such that $|[\bA]^n\rangle=|[\bB]^n\rangle$ and $\bB$ is block-normal, i.e. $\bB=\oplus_\alpha\bB_\alpha$, where each block $\bB_\alpha$ is normal. As a consequence of this theorem, for the purposes of studying the PBC MPS it suffices to consider only normal MPS tensors.

If, however, one wishes to study MPS with more general boundary conditions $X$, the canonical form theorem does not apply, and so it is no longer the case that all phenomena are captured by normal MPS tensors. In the following subsections, it will be interesting to examine MPS wavefunctions of tensors that are stable but not normal. Before getting to examples, let us discuss some general theory.

Recall that the block-normal form of a PBC MPS is obtained by finding the irreducible subrepresentations of the MPS tensor, using a similarity transformation to make the MPS tensor block-triangular with respect to these subspaces, then throwing away the off-block-diagonal components \cite{PerezGarcia07}. For simplicity, suppose we have already performed the similarity transformation and that there are only two blocks. This means there is a projector $P$ satisfying $\bA P=P\bA P$, in terms of which the block-normal tensor is given by
\begin{equation}
    \bB=P\bA P+P^\perp\bA P^\perp~,
\end{equation}
where $P^\perp=\mathds{1}-P$. Note that this other projector satisfies $PP^\perp=P^\perp P=0$, $P^\perp\bA=P^\perp \bA P^\perp$ and $P^\perp\bA P=0$. Now consider a boundary condition $X$, and manipulate the corresponding MPS:
\begin{align}\begin{split}
    \ket{X[\bA]^n}
    &=\ket{X[(P + P^\perp)\bA(P + P^\perp)]^n}\\
    &=\ket{X[P\bA P]^n} + \ket{X[P^\perp\bA P^\perp]^n} + \sum_{\ell = 1}^{n-1}\ket{(P^\perp XP)[P\bA P]^{\ell-1}[P\bA P^\perp][P^\perp\bA P^\perp]^{n-\ell}}\\
    &=\ket{X[\bB]^n}+\sum_{\ell = 1}^{n-1}\ket{(P^\perp XP)[P\bA P]^{\ell-1}[P\bA P^\perp][P^\perp\bA P^\perp]^{n-\ell}}~.
\end{split}\end{align}
When $P^\perp XP=0$ (as in the PBC case) or when $P\bA P^\perp=0$ (i.e. if $\bA$ is already block-normal), the sum vanishes and one is left with the MPS for the canonical form tensor $\bB$. But in general, the sum remains. It is a moving wave superposition of MPS with tensor $P\bA P$ on the left, tensor $P\bA P^\perp$ at some position $\ell$ in the middle, and tensor $P^\perp \bA P^\perp$ on the right. This is the effect we see in the examples below.

\subsection{The W-state MPS tensor}
Let us consider the MPS tensor $\bA = A_0 \otimes \ket{0} + A_1\otimes\ket{1}$ specified by the matrices
\begin{equation}\label{Wstatensor}
    A_0=\mathds{1}_2,\quad  A_1 = \ket{0}\bra{1}. 
\end{equation}
This MPS tensor generates the W-state for the boundary condition $X= \frac{1}{\sqrt n}\sigma^+ = \frac{1}{\sqrt n}\ket{1}\bra{0}$:
\begin{equation}
     \ket{W_n}:=\frac{1}{\sqrt n}\left(|10\cdots 0\rangle+|01\cdots 0\rangle+\cdots+|00\cdots 1\rangle\right) 
     = \frac{1}{\sqrt n}\sum_{i_1,\ldots,i_n} \bra{0} A_{i_1}A_{i_2}\dots A_{i_n} \ket{1} \cdot \ket{i_1 i_2 \dots i_n}.
\end{equation}
In the language of \cref{sec:canonical}, this tensor has $P=|0\rangle\langle 0|$, which yields the block-normal tensor $\bB=\mathds{1}\otimes|0\rangle$. Since $|X[\bB]\rangle$ vanishes, the state is nothing but the moving wave superposition with $P\bA P=|0\rangle\langle 0|\otimes |0\rangle$ on the left, $P\bA P^\perp=|0\rangle\langle 1|\otimes |1\rangle$ in the middle, and $P^\perp\bA P^\perp=|1\rangle\langle 1|\otimes|0\rangle$ on the right.

Using the facts that $[A_0,A_1]=0$ and $A_1A_1=0$, one can compute the open and periodic MPS spaces:
\begin{equation}
    \cS_n^P(\bA)=\mathcal{S}_n(\bA) = \Span\{\,|0\rangle^{\otimes n}, |W_n\rangle\,\}~.
\end{equation}
Then we can determine the parent Hamiltonian ground spaces:
\mybox{
\begin{proposition}
The MPS tensor of the W-state satisfies intersection at lengths $k\ge 2$. The ground spaces of its OBC and PBC parent Hamiltonians coincide and are spanned by $|0\rangle^{\otimes n}$ and $|W_n\rangle$.
\end{proposition}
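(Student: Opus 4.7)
The plan is to reduce both claims to \cref{PropIP} and \cref{corol} by establishing (left and right) $1$-stability of $\bA$. Once stability at length $j=1$ is in hand, \cref{PropIP} yields intersection at every $k\ge 2$, and \cref{corol} identifies the ground spaces of the OBC and PBC parent Hamiltonians (for any interaction length $\ell\ge 2$) with $\cS_n(\bA)$ and $\cS_n^P(\bA)$ respectively; since the excerpt has already computed $\cS_n(\bA)=\cS_n^P(\bA)=\Span\{|0\rangle^{\otimes n},|W_n\rangle\}$, the stated description of the ground spaces follows immediately.

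The main step is therefore to exhibit stability matrices at length $j=1$. A brief computation using $A_0^2=\mathds{1}_2$, $A_0A_1=A_1A_0=|0\rangle\langle 1|$, and $A_1^2=0$ yields $\mathcal{V}_1=\mathcal{V}_2=\Span\{\mathds{1}_2,\,|0\rangle\langle 1|\}$. The decisive observation is that $\mathds{1}_2\in\mathcal{V}_1$, which makes \cref{def:stable} almost free: take $Y_0=\mathds{1}_2$ and $Y_1=0$. Then $Y_i\mathcal{V}_2\subseteq\mathcal{V}_1$ holds trivially, and $Z=A_0Y_0+A_1Y_1=\mathds{1}_2$ fixes every element of $\mathcal{V}_2$. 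This proves left $1$-stability; the identical choice with $Z=Y_0A_0+Y_1A_1=\mathds{1}_2$ also gives right $1$-stability.

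To invoke \cref{corol} I still need non-nilpotency of $\bA$ (for the OBC statement) and nontriviality of $\cS_n^P(\bA)$ (for the PBC statement). The first is immediate because $A_0^n=\mathds{1}_2$, so $\cV_n\ne\{0\}$ for all $n$; the second is already recorded in the excerpt. Hence \cref{corol} produces $GS(H_n(\bA,\ell))=\cS_n(\bA)$ and $GS(H^P_n(\bA,\ell))=\cS_n^P(\bA)$ for any $\ell\ge 2$, and both spaces coincide with $\Span\{|0\rangle^{\otimes n},|W_n\rangle\}$.

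The main ``obstacle'' is really that there is no obstacle here: the trivial block $A_0=\mathds{1}_2$ simultaneously puts $\mathds{1}_2$ into $\mathcal{V}_1$ (which trivialises stability) and makes $\bA$ non-nilpotent (a hypothesis of \cref{corol}). The conceptual point worth emphasising is that $\bA$ is \emph{not} injective at any length --- the algebra generated by $A_0,A_1$ is two-dimensional and never fills $\mathcal{M}_2$ --- so the standard normal or block-injective machinery does not apply to this example, and one genuinely needs the more general stability condition of \cref{def:stable}.
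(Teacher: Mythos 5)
Your proof is correct and follows essentially the same route as the paper: verify $1$-stability directly from \cref{def:stable} (using $\mathcal{V}_1=\mathcal{V}_2=\Span\{\mathds{1}_2,\ket{0}\bra{1}\}$) and then invoke \cref{corol} together with the already-computed $\cS_n(\bA)=\cS_n^P(\bA)$. The only difference is the choice of witness ($Y_0=\mathds{1}_2,\ Y_1=0$ versus the paper's $Y_i=A_i$), which is immaterial; your explicit checks of non-nilpotency and of $\cS_n^P(\bA)\neq\{0\}$ are a welcome bit of extra care.
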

}
\begin{proof}
The W state MPS tensor defined in Eq.~\eqref{Wstatensor} is $1$-stable  (see \cref{def:stable}) with $Y_i=A_i$: indeed, $\mathcal{V}_j = \Span\{\mathds{1},\ket{0}\bra{1}\}$ for all $j$, and thus $Y_i \mathcal{V}_2 = A_i \mathcal{V}_2 \subseteq \mathcal{V}_3 = \mathcal{V}_1$ and $Y_0 A_0 + Y_1 A_1 = \mathds{1}$. Then by \cref{corol}, the ground spaces are given by $\cS_n(\bA)$ and $\cS_n^P(\bA)$, which we have just computed.
\end{proof}

An alternative proof may be given by direct calculation. Since $\mathcal{S}_{k+1} \subseteq (\mathcal{S}_k\otimes \C^2) \cap (\C^2\otimes \mathcal{S}_k)$ for free, we have to check the other inclusion. Consider an element in the intersection of 
\begin{equation}
\mathcal{S}_k(\bA)\otimes \C^2 = \Span \{ \ket{0}^{\otimes k+1}, \ket{0}^{\otimes k}\ket{1}, \ket{W_k}\ket{0},\ket{W_k}\ket{1}\},
\end{equation}
and
\begin{equation}
\C^2\otimes \mathcal{S}_k(\bA) = \Span \{ \ket{0}^{\otimes k+1}, \ket{1}\ket{0}^{\otimes k}, \ket{0}\ket{W_k},\ket{1}\ket{W_k}\}.   
\end{equation}
Since, for $k\ge 2$, there are terms in $\ket{W_k}\ket{1}$ that are not in $\ket{1}\ket{W_k}$, and vice-versa, these two factors are not in the intersection. Finally one has to realize that $\ket{0}^{\otimes k}\ket{1}+ \ket{W_k}\ket{0} = \ket{1}\ket{0}^{\otimes k}+ \ket{0}\ket{W_k} = \ket{W_{k+1}}$. 

For completeness, we write the local term of the $2$-local parent Hamiltonians explicitly:
\begin{equation}
    h= \ket{11}\bra{11}+ \frac{1}{2}(\ket{01}-\ket{10})(\bra{01}-\bra{10})= \begin{pmatrix}
    0 & 0 & 0 & 0 \\
    0 & 1/2 & -1/2 & 0 \\
    0 & -1/2 & 1/2 & 0 \\
    0 & 0 & 0 & 1
  \end{pmatrix} \ .
\end{equation}
A recent paper \cite{delcamp2024nonsemisimplenoninvertiblesymmetry} also discusses a local Hamiltonian with $|0\rangle^{\otimes n}$ and $|W\rangle$ as ground states.

Note that the $W$-state can be generalized by replacing $|0\rangle,|1\rangle$ with two nonzero vectors $\ket{a},\ket{b}\in\cH$:
\begin{equation}
    \bA=\mathds{1}_2\otimes \ket{a}+\ket{0}\bra{1}\otimes \ket{b} = \begin{pmatrix}
        \ket{a} & \ket{b} \\ 0 & \ket{a}
    \end{pmatrix}~.
\end{equation}
Up to normalization, this tensor gives the MPS
\begin{equation}\label{eq:W_generalization_a_b_MPS}
      \sum_{i_1,\ldots,i_n} \bra{0}A_{i_1}A_{i_2}\dots A_{i_n} \ket{1} \cdot \ket{i_1 i_2 \dots i_n}
      =|ba\cdots a\rangle+|ab\cdots a\rangle+\cdots+|aa\cdots b\rangle~.
\end{equation}
This tensor is $1$-stable: One can find $c_i$ and $d_i$ such that $\sum_i a_i c_i = 1$ and $\sum_i c_i b_i + d_i a_i=0$. Then the matrix $Y_i = c_i \mathds{1} + d_i \ket{0}\bra{1}$ satisfies $\sum_i Y_i A_i = \sum_i A_i Y_i = \mathds{1}$. Consequently, $Y_i V_{j+1} \subseteq V_j$ for all $j\geq 1$. The OBC and PBC ground spaces are spanned by the MPS in \cref{eq:W_generalization_a_b_MPS} and $|a\rangle^{\otimes n}$.

\subsection{Dicke states}

Let us consider the MPS tensor $\bA = A_0 \otimes \ket{0} + A_1\otimes\ket{1}$ specified by the matrices
\begin{equation}\label{supDicke}
 A_0= \mathds{1}_D, \quad A_1 = \sum_{i=0}^{D-2}\ket{i}\bra{i+1}.     
\end{equation}

Similarly to the W state tensor, since $A_0$ and $A_1$ commute and $(A_1)^D=0$, we have, for $n\ge D-1$,
\begin{equation}\label{eq:dicke}
   \cS_n^P(\bA)=\cS_n(\bA) = \{\,\ket{0}^{\otimes n}, \ket{W_n^1},\ket{W^2_n},\ldots,\ket{W^{D-1}_n}\,\} , 
\end{equation}
where $\ket{W^p_n} = \bra{0}[\bA]^n \ket{p}$ is a Dicke state \cite{PhysRev.93.99,B_rtschi_2019} given by the equal weight superposition of all $n$-qubit states that have exactly $p$-many $|1\rangle$'s in them; for example, $|W_n^1\rangle$ is the W state.

This MPS tensor has stability length $j=D-1$ with $Y_0 = A_0 = \mathds{1}$ and $Y_1 = (A_1)^{D-1} = \ket{0}\bra{D-1}$. Indeed, notice first that for all $j\geq D-1$, the virtual subspace is $\mathcal{V}_j = \Span\{\,\mathds{1}, A_1, \dots , (A_1)^{D-1}\,\}$. Then $Y_0 \mathcal{V}_D = \mathcal{V}_D = \mathcal{V}_{D-1}$ and $Y_1 \mathcal{V}_D = \Span\{(A_1)^{D-1}\}\subseteq \mathcal{V}_{D-1}$, and $A_0 Y_0 + A_1 Y_1 = \mathds{1}$. Therefore, the MPS tensor defined in Eq.~\eqref{supDicke} satisfies the intersection property at lengths $k\ge D$, and the ground space of both the open boundary and periodic boundary $D$-local parent Hamiltonians are the Dicke states \eqref{eq:dicke}. 

The intersection property can also be shown by direct calculation just as for the W-state. For $k\ge D$, consider the subspaces
\begin{align}\begin{split}
    \mathcal{S}_k(\bA)\otimes \C^2 & = \Bigg\{ \ket{0}^{\otimes k+1}, \overbrace{\ket{0}^{\otimes k}\ket{1},\ket{W_k}\ket{0}}^{\ket{W_{k+1}}},
\overbrace{\ket{W_k}\ket{1}, \ket{W^2_k}\ket{0}}^{\ket{W^2_{k+1}}}, \ldots, 
\overbrace{\ket{W^{D-2}_k}\ket{1}, \ket{W^{D-1}_k}\ket{0}}^{\ket{W^{D-1}_{k+1}}}, \ket{W^{D-1}_k}\ket{1} \Bigg\}~,\\
\C^2 \otimes \mathcal{S}_k(\bA)  &= \Bigg\{ \ket{0}^{\otimes k+1}, 
\overbrace{ \ket{1}\ket{0}^{\otimes k} ,\ket{0}\ket{W_k}}^{\ket{W_{k+1}}},
\overbrace{ \ket{1}\ket{W_k}, \ket{0}\ket{W^2_k} }^{\ket{W^2_{k+1}}}, \ldots, 
\overbrace{\ket{1}\ket{W^{D-2}_k}, \ket{0}\ket{W^{D-1}_k}}^{\ket{W^{D-1}_{k+1}}}, \ket{1}\ket{W^{D-1}_k} \Bigg\} \ .
\end{split}\end{align}
Their intersection is $\mathcal{S}_{k+1}(\bA)$ since the terms under the brackets must be equal and (since $k\ge D$) the last term of each subspace does not belong to the intersection.

\subsection{W state with momentum}

Another generalization of the W state on $n$ sites is given by
\begin{equation}
    |W_n(p)\rangle=\frac{1}{\sqrt{n}}\sum_{j=1}^ne^{-ip(j-1)}X_j|0\rangle^{\otimes n}~,
\end{equation}
for some value $p$, and is described by the MPS tensor and boundary condition
\begin{equation}
    A_0=e^{-ip}|0\rangle\langle 0|+|1\rangle\langle 1|~,\quad A_1=|0\rangle\langle 1|~,\quad X=\frac{1}{\sqrt{n}}|1\rangle\langle 0|~.
\end{equation}
Using the relations $A_0A_1=e^{-ip}A_1A_0$ and $A_1A_1=0$, we find that the physical MPS subspace for this tensor is given by
\begin{equation}\label{WmomSn}
    \cS_n(\bA)=\Span\{\,|0\rangle^{\otimes n}, |W_n(p)\rangle\,\}~.
\end{equation}

When $p$ is quantized as $p\in (2\pi/n)\mathbb{Z}$, the state $|W_n(p)\rangle$ is an eigenstate of the translation operator $\tau_n$ with eigenvalue $e^{ip}$, so $p$ may be regarded as a momentum. In this case, the periodic MPS subspace coincides with the open MPS subspace. In general, since $\tau_nX_j=X_{j+1}\tau_n$, the translated state is
\begin{equation}
    \tau_n|W_n(p)\rangle=\frac{1}{\sqrt{n}}\sum_{j=1}^ne^{-ip(j-1)}X_{j+1}|0\rangle^{\otimes n}=\frac{e^{ip}}{\sqrt{n}}\left(e^{-ipn}X_1|0\rangle^{\otimes n}+\sum_{j=2}^ne^{-ip(j-1)}X_j|0\rangle^{\otimes n}\right)~,
\end{equation}
which only lies in $\cS_n(\bA)$ if $n=1$ or the $e^{-ipn}$ coefficient of the first term vanishes, i.e. if $p$ is quantized correctly. Therefore the periodic MPS subspace is
\begin{equation}\label{Wmompbc}
    \cS_n^P(\bA)=\left\{\begin{array}{ll}\Span\{|0\rangle^{\otimes n},|W_n(p)\rangle\}&\quad\text{if }p\in(2\pi/n)\mathbb{Z}\text{ or }n=1\\\Span\{|0\rangle^{\otimes n}\}&\quad\text{otherwise}\end{array}\right.~.
\end{equation}
The virtual subspaces are
\begin{equation}
    \cV_j(\bA)=\Span\{\,(A_0)^j, (A_0)^{j-1}A_1\,\}~,
\end{equation}
and the MPS tensor is $1$-stable with $Y_0=e^{ip}|0\rangle\langle 0|+|1\rangle\langle 1|$ and $Y_1=0$. Indeed, since $Y_0A_0=\mathds{1}$, we have $Y_0\cV_{j+1}=\cV_j$ and $Y_1\cV_{j+1}=\{0\}\subseteq\cV_j$ and $A_0Y_0+A_1Y_1=\mathds{1}$. This implies that intersection holds at $k\ge 2$ and the OBC and PBC parent Hamiltonians have ground spaces Eq.~\eqref{WmomSn} and Eq.~\eqref{Wmompbc}, respectively.

For completeness, write the local term of the $2$-local parent Hamiltonians explicitly:
\begin{equation}
    h=\mathds{1}-|00\rangle\langle00|-\frac{(e^{-ip}|01\rangle+|10\rangle)(e^{ip}\langle01|+\langle10|)}{(e^{ip}\langle01|+\langle 10|)(e^{-ip}|01\rangle+|10\rangle)}=\begin{pmatrix}
        0 & 0 & 0 & 0 \\
        0 & 1/2 & -e^{-ip}/2 & 0 \\
        0 & -e^{ip}/2 & 1/2 & 0 \\
        0 & 0 & 0 & 1        
    \end{pmatrix}~.
\end{equation}

\subsection{Superposition of domain walls}

The GHZ state on $n$ sites $|GHZ_n\rangle=|0\cdots0\rangle+|1\cdots 1\rangle$ is described by the $D=2$ block-normal MPS tensor $A_0=|0\rangle\langle 0|$, $A_1=|1\rangle\langle 1|$ with $X=\mathds{1}$. This tensor defines the space $\cS_n(\bA)=\Span\{|0\cdots0\rangle,|1\cdots 1\rangle\}$.

Let us consider the following tensor, obtained from the GHZ tensor by introducing a block-off-diagonal component:
\begin{equation}
    A_0=\begin{pmatrix}
        1&1\\0&0
    \end{pmatrix}~,\qquad A_1=\begin{pmatrix}
        0&0\\0&1
    \end{pmatrix}~.
\end{equation}
This tensor satisfies
\begin{equation}
    A_1A_1 = A_1~,\quad A_0A_0 = A_0~,\quad A_1A_0 = 0~,\quad\text{and}\quad A_0A_1=\begin{pmatrix}
        0&1\\0&0
    \end{pmatrix}~,
\end{equation}
from which we can check that, for $n\ge 2$,
\begin{equation}
    \mathcal{S}_n(\bA) = \left\{ \, \ket{0}^{\otimes n}, \ket{1}^{\otimes n}, \ket{DW_n} \, \right\},
\end{equation}
where 
\begin{equation}
    \ket{DW_n} = \sum_{i=2}^n \prod_{j=i}^n X^j \ket{0}^{\otimes n} = \ket{01\cdots 1}+\ket{001\cdots 1} + \cdots + \ket{0\cdots 01}
\end{equation}
is the superposition of all $n$-qubit states with one domain wall of type $(01)$. Like in the $W$ state example, a purely off-block-diagonal boundary condition $X=\sigma^+$ produces a moving wave superposition:
\begin{equation*}
    |\sigma^+[\bA]^n\rangle=|1\cdots 1\rangle+|DW_n\rangle+|0\cdots 0\rangle~.
\end{equation*}

The virtual MPS subspaces are given by
\begin{equation}
    \cV_1=\Span\{\,A_0, A_1\,\}~,\qquad\cV_{j\ge 2}=\Span\{\,A_0, A_1, A_0A_1\,\}~,
\end{equation}
and the stability property is satisfied at $j=2$ with, for example, $Y_0=A_0(1-A_1)$, $Y_1=A_1$, and $Z=\mathds{1}$. Therefore, the intersection property holds on $k\geq 3$ sites and thus the ground space of the $3$-local parent Hamiltonian on any system size $n\geq 3$ is $\mathcal{S}_n(\bA)$.

It is instructive to give an alternative proof of intersection by direct computation. First notice that
\begin{align}\begin{split}
     \mathcal{S}_k(\bA)\otimes \C^2 &= \Span\{ \,\ket{0}^{\otimes k+1}, \ket{1}^{\otimes k+1}, \ket{0}^{\otimes k}\ket{1}, \ket{DW_k}\ket{1},\ket{1}^{\otimes k}\ket{0}, \ket{DW_k}\ket{0} \,\}~,\\
     \C^2\otimes \mathcal{S}_k(\bA) & = \Span\{ \,\ket{0}^{\otimes k+1}, \ket{1}^{\otimes k+1}, \ket{0}\ket{1}^{\otimes k}, \ket{0}\ket{DW_k},\ket{1}\ket{0}^{\otimes k}, \ket{1}\ket{DW_k} \,\}~.
\end{split}\end{align} 
Therefore a state $v$ in the intersection can be written as 
\begin{align}\begin{split}
     v &= \alpha_1 \ket{0}^{\otimes k+1} +\alpha_2 \ket{1}^{\otimes k+1} + \alpha_3 \ket{0}^{\otimes k}\ket{1} + \alpha_4 \ket{DW_k}\ket{1} + \alpha_5 \ket{1}^{\otimes k}\ket{0}+ \alpha_6 \ket{DW_k}\ket{0} \\
       &= \beta_1  \ket{0}^{\otimes k+1}+\beta_2 \ket{1}^{\otimes k+1}+\beta_3 \ket{0}\ket{1}^{\otimes k} + \beta_4 \ket{0}\ket{DW_k} + \beta_5 \ket{1}\ket{0}^{\otimes k} + \beta_6 \ket{1}\ket{DW_k}. 
\end{split}\end{align} 
Comparing the two sides, we see that $\alpha_1 = \beta_1$, $\alpha_2 = \beta_2$, and $\alpha_3 = \alpha_4= \beta_3 = \beta_4$ and $\alpha_5 =\alpha_6 = \beta_5 = \beta_6 = 0$. Note that third relation requires $k\ge 3$, as $k=2$ only has $\alpha_3=\beta_4$ and $\alpha_4=\beta_3$ separately. Finally notice that $\ket{0}^{\otimes k}\ket{1} + \ket{DW_k}\ket{1} = \ket{DW_{k+1}}$ and thus $v\in \mathcal{S}_{k+1}(\bA)$. Therefore the intersection property holds for $k\geq 3$, and thus the $3$-local OBC ground space on $n$ sites is $\mathcal{S}_n(\bA)$. 

For completeness, we write the $3$-local parent Hamiltonian for this MPS tensor:
\begin{equation}
h = \mathds{1} - \ket{000}\bra{000} -\ket{111}\bra{111} - (\ket{001}+\ket{011})(\bra{001}+\bra{011})~.   
\end{equation}

This domain wall superposition on qubits can be generalized to any Hilbert space and two vectors $|a\rangle, |b\rangle\in\cH$. The MPS tensor
\begin{equation}
    \bA=\ket{0}\left(\bra{1}+\bra{0}\right)\otimes  \ket{a}+\ket{1}\bra{1}\otimes \ket{b}~,
\end{equation}
is $2$-stable and generates the subspace
\begin{equation}
    \mathcal{S}_n (\bA) = \{ \ket{a}^{\otimes n}, \ket{b}^{\otimes n}, \ket{DW^{ab}_n}\} \ , \quad \ket{DW^{ab}_n} = \ket{a}\ket{b}^{\otimes n-1}+\ket{a}^{\otimes 2}\ket{b}^{\otimes n-2} + \cdots + \ket{a}^{\otimes n-1}\ket{b}~.
\end{equation}

\subsection{Antiferromagnetic Ising model}

Consider the MPS tensor defined by matrices
\begin{equation}
  A_0= \ket{1}\bra{0}~, \quad \text{and} \quad A_1= \ket{0}\bra{1}~.  
\end{equation}
The MPS tensor describes the ground space of the antiferromagnetic Ising model; that is, 
\begin{equation}
   \mathcal{S}_n(\bA) = \Span\{ \ket{0101\cdots}, \ket{1010\cdots}\}.
\end{equation}
The $2$-local parent Hamiltonian has local term $h= \ket{11}\bra{11}+\ket{00}\bra{00}$.

This MPS tensor has virtual subspaces
\begin{equation}
    \mathcal{V}_\text{odd} = \Span \{ \,\ket{0}\bra{1}, \ket{1}\bra{0}\,\}~,\qquad\mathcal{V}_\text{even}= \Span \{ \,\ket{0}\bra{0}, \ket{1}\bra{1}\,\}~.
\end{equation}
It is $1$-stable with $Y_0=A_1$ and $Y_1=A_0$. Therefore the OBC ground space is $\cS_n(\bA)$.

The PBC ground state degeneracy depends on whether the system size $n$ is even or odd. For even $n$ the PBC ground space is the full $\mathcal{S}_n(\bA)$, i.e. $2$-fold degenerate. For odd $n$ the PBC parent Hamiltonian is frustrated as $\cS_n^P(\bA)$ is trivial, so our results relating the ground spaces to the physical MPS spaces do not apply. The degeneracy is $2n$; in particular, it is not bounded by the MPS bond dimension. There is a ground space basis in consisting of states that violate precisely one local Hamiltonian term. This basis is given by translating $1$ or $0$ defects on the patterns $(01)(01)...(01)$ or $(10)(10)...(10)$.

\subsection{Examples satisfying the intersection property}\label{sec:examples-nonstable}

Now we discuss three families of tensors, generalizing the W state and domain wall superpositions, that satisfy intersection despite not being stable. However, some of the tensors have some stable components.

\subsubsection{Generalized W states}

Now we define a family of MPS that generalizes the W state in the sense that they are superpositions of a moving tensor (the role of $1$) on a background of a different tensor (instead of $0$):
\begin{equation}
    \ket{W_n^{\bA, \bB}} = \sum_{\ell=1}^n\ket{ [\bA]^{\ell-1}[\bB][\bA]^{n-\ell}} \ .
\end{equation}
Given MPS tensors $\bA$ and $\bB$ with the same bond dimension $D$, the MPS tensor $\bC$ with bond dimension $2D$ given by
\begin{equation}\label{gen-W}
    C_i=\mathds{1}_2\otimes A_i+\ket{0}\bra{1}\otimes B_i = \begin{pmatrix}
        A_i & B_i \\ 0 & A_i
    \end{pmatrix}~.
\end{equation}
generates this generalized MPS state. One can compute the physical subspace
\begin{equation}
    \mathcal{S}_n(\bC) = \mathcal{S}_n(\bA) + \mathcal{S}^W_n(\bA, \bB) \ , \quad \mathcal{S}^W_n(\bA, \bB) = \Span\left\{ \, \sum_{\ell=1}^n\ket{ X [\bA]^{\ell-1}[\bB][\bA]^{n-\ell}} \, \middle| \, X\in \mathcal{M}_D \, \right\}~.
\end{equation}
We claim the following:
\begin{proposition}
    If $\bA$ satisfies the intersection property at length $k$ and $\mathcal{S}_1(\bA) \cap \mathcal{S}_1(\bB) = 0$, then $\bC$ also satisfies the intersection property at length $k$.
\end{proposition}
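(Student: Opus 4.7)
The plan is to establish the direct-sum decomposition $\cS_n(\bC) = \cS_n(\bA) \oplus \cS^W_n(\bA,\bB)$ under the hypothesis, use it to uniquely parameterize states in the intersection by tuples of matrices, and then construct $Z, Y \in \cM_D$ exhibiting $|\Psi\rangle = |Z[\bA]^{k+1}\rangle + \Phi^W_{k+1}(Y) =: \Phi_{k+1}(Z,Y)$, where $\Phi^W_n(Y) := \sum_{\ell=1}^n |Y[\bA]^{\ell-1}[\bB][\bA]^{n-\ell}\rangle$.

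First I would prove by induction on $n$ that $\cS_n(\bA) \cap \cS^W_n(\bA,\bB) = 0$. The base case $n = 1$ is exactly the hypothesis, since $\cS^W_1(\bA,\bB) = \cS_1(\bB)$. For the inductive step, suppose $|Z[\bA]^n\rangle = \Phi^W_n(Y)$. Slicing at the last site and using the inductive direct sum to separate the $\cS_{n-1}(\bA)$ and $\cS^W_{n-1}$ contributions yields both $|A_i Z[\bA]^{n-1}\rangle = |B_i Y[\bA]^{n-1}\rangle$ and $\Phi^W_{n-1}(A_i Y) = 0$ for every $i$; summing the latter over $i$ tensored with $|i\rangle$ and using the identity $\sum_i|A_i Y[\bA]^{\ell-1}[\bB][\bA]^{n-1-\ell}\rangle|i\rangle = |Y[\bA]^{\ell-1}[\bB][\bA]^{n-\ell}\rangle$ collapses things to $|Z[\bA]^n\rangle = |Y[\bA]^{n-1}[\bB]\rangle$. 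Slicing this identity at the first site and invoking the inductive direct sum once more forces $|ZA_i[\bA]^{n-1}\rangle = 0$ for every $i$, and summing gives $|Z[\bA]^n\rangle = 0$ and hence $\Phi^W_n(Y) = 0$.

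Given $|\Psi\rangle \in (\cS_k(\bC) \otimes \cH) \cap (\cH \otimes \cS_k(\bC))$, the direct sum furnishes unique-modulo-kernels parameters $|\Psi\rangle = \sum_i \Phi_k(Z_i, Y_i)\,|i\rangle$ (left decomposition) and $|\Psi\rangle = \sum_i |i\rangle\,\Phi_k(Z'_i, Y'_i)$ (right). Expanding $\Phi_{k+1}(Z,Y)$ at the last and first sites using the identities from the preceding step shows that $|\Psi\rangle = \Phi_{k+1}(Z,Y)$ is equivalent to the four relations $Y_i \equiv A_i Y$, $Y'_i \equiv Y A_i$, $Z_i \equiv A_i Z + B_i Y$, $Z'_i \equiv Z A_i + Y B_i$ modulo the respective kernels. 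To construct $Y$, I would slice $|\Psi\rangle$ at an interior site and use the single-site direct sum $\cS_1(\bA) \cap \cS_1(\bB) = 0$ to isolate the $\cS_1(\bB)$-components of both sides of the equation; these components, being linear in $Y$ through $A_i Y$ (left) and $Y A_i$ (right), pin $Y$ down consistently. With $Y$ in hand, $|\Psi\rangle - \Phi^W_{k+1}(Y)$ has vanishing $\cS^W_k$-content in both decompositions by direct-sum uniqueness, so it lies in $(\cS_k(\bA) \otimes \cH) \cap (\cH \otimes \cS_k(\bA)) = \cS_{k+1}(\bA)$ by the intersection property for $\bA$ at length $k$, supplying $Z$ with $|\Psi\rangle - \Phi^W_{k+1}(Y) = |Z[\bA]^{k+1}\rangle$, whence $|\Psi\rangle = \Phi_{k+1}(Z, Y) \in \cS_{k+1}(\bC)$.

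The main obstacle is the construction of $Y$: the subspaces $\cS_k(\bA) \otimes \cH + \cH \otimes \cS_k(\bA)$ and $\cS^W_k \otimes \cH + \cH \otimes \cS^W_k$ of $\cH^{\otimes(k+1)}$ need not have trivial intersection, so $|\Psi\rangle$ cannot be split naively into an $\bA$-part and a $W$-part at the top level; the slicing argument circumvents this by reducing the extraction of $Y$ to a single site, where the hypothesis supplies the clean direct sum needed to disentangle the $\bA$- and $\bB$-contributions.
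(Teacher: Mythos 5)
Your overall skeleton is reasonable and runs parallel to the paper's proof (which decomposes the two sides by the number of $\bB$ tensors, handles the zero-$\bB$ sector with the intersection property of $\bA$, and shows the one-$\bB$ sector equals $\cS^W_{k+1}(\bA,\bB)$), but the proposal has a genuine gap exactly where the real work lies: the construction of the single matrix $Y$. After slicing, what you actually have are two families $\{Y_i\}$ and $\{Y_i'\}$, and isolating the $\cS_1(\bB)$-component at an interior site $j$ only produces, for each $j$, an equality between the ``$\bB$ at site $j$'' pieces of the two decompositions, each still parameterized by the whole family ($\sum_i \ket{Y_i[\bA]^{j-1}\bB[\bA]^{k-j}}\otimes(\cdot)$ on one side, $(\cdot)\otimes\ket{Y_i'[\bA]^{j-2}\bB[\bA]^{k+1-j}}$ on the other). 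What must be proved is that these relations force the entire one-$\bB$ content of $\ket{\Psi}$ to be a \emph{single} moving-wave state $\Phi^W_{k+1}(Y)$ with one common boundary matrix $Y$ for all positions $1,\dots,k+1$ --- including the boundary positions, where the $\bB$-at-site-$(k+1)$ terms appear only in the left decomposition and the $\bB$-at-site-$1$ terms only in the right, and including the vanishing of what the paper calls the two-$\bB$ sector. Saying the components ``pin $Y$ down consistently'' is an assertion of exactly this statement, not an argument for it; you yourself flag it as the main obstacle but then only describe the difficulty. The paper's proof does this step explicitly: it matches the components with $\bB$ in the same position across the two decompositions (using $\cS_1(\bA)\cap\cS_1(\bB)=0$), argues each such component has the form $\ket{X[\bA]^{j-1}\bB[\bA]^{k+1-j}}$, and stitches them into a common $X$, which is precisely what is missing from your write-up. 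Your final step (subtracting $\Phi^W_{k+1}(Y)$ and invoking the intersection property of $\bA$) is fine, but only once such a $Y$ exists.

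There is also a smaller flaw in your induction for the direct-sum lemma $\cS_n(\bA)\cap\cS^W_n(\bA,\bB)=\{0\}$. After reducing to $\ket{Z[\bA]^n}=\ket{Y[\bA]^{n-1}\bB}$, slicing at the first site gives $\ket{ZA_i[\bA]^{n-1}}=\ket{YA_i[\bA]^{n-2}\bB}$, and the right-hand side is a \emph{single-position} term, which is in general not an element of $\cS^W_{n-1}(\bA,\bB)$ (that space consists of sums over all positions), so ``invoking the inductive direct sum once more'' does not apply. The conclusion is nevertheless true and easy to repair: since $\cS_1(\bA)\cap\cS_1(\bB)=\{0\}$, apply at the last site a projector that kills $\cS_1(\bB)$ and acts as the identity on $\cS_1(\bA)$. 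In fact the whole lemma follows in one stroke by applying, at each site $j$, a projector killing $\cS_1(\bA)$ and fixing $\cS_1(\bB)$, which annihilates $\ket{Z[\bA]^n}\in\cS_1(\bA)^{\otimes n}$ and isolates the position-$j$ term of $\Phi^W_n(Y)$, so no induction is needed.
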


\begin{proof}
We want to show that $(\mathcal{S}_k(\bC)\otimes \mathcal{S}_1(\bC)) \cap (\mathcal{S}_1(\bC)\otimes \mathcal{S}_k(\bC)) = \mathcal{S}_{k+1}(\bC)$. The inclusion $\supseteq$ holds for any tensor, so we must show the inclusion $\subseteq$ for $\bC$. Note that
\begin{equation}
    \mathcal{S}_1(\bC) = \mathcal{S}_1(\bA) \oplus \mathcal{S}_1(\bB)~,
\end{equation}
and thus we are looking for the intersection of the following subspaces
\begin{align}\begin{split}
     \mathcal{S}_k(\bC)\otimes \mathcal{S}_1(\bC) =& \ \Span \{ \mathcal{S}_k (\bA)\otimes \mathcal{S}_1(\bA) ,
     \mathcal{S}_k(\bA)\otimes \mathcal{S}_1(\bB) ,
     \mathcal{S}^W_k(\bA, \bB)\otimes \mathcal{S}_1(\bA) ,
     \mathcal{S}^W_k(\bA, \bB)\otimes \mathcal{S}_1(\bB) \}
      \\
     \mathcal{S}_1(\bC)\otimes \mathcal{S}_k(\bC) =& \ \Span\{ \mathcal{S}_1(\bA)\otimes \mathcal{S}_k(\bA) , \mathcal{S}_1(\bA)\otimes \mathcal{S}^W_k(\bA, \bB) ,
     \mathcal{S}_1(\bB)\otimes \mathcal{S}_k(\bA) ,
     \mathcal{S}_1( \bB)\otimes \mathcal{S}^W_k(\bA, \bB)\}~.
\end{split}\end{align}
Since $\mathcal{S}_1(\bA) \cap \mathcal{S}_1(\bB) = 0 $, the intersection of the subspaces is a direct sum of intersections between linear combination of terms with the same number of $\bB$ tensors on them. With no $\bB$ tensors, we have the first term of each subspace. Due to the assumption that $\bA$ satisfies the intersection property, the intersection of the first term of each side is $\mathcal{S}_{k+1}(\bA)$. With two $\bB$ tensors, we have the last term of each subspace, and we can check their intersection is zero. To see this, note that if there is a nonzero element in the intersection there must exists nonzero matrices $X_1,X_2,Y_1,Y_2 \in \mathcal{M}_D$ such that
\begin{equation}
    \sum_{\ell=1}^n\ket{ X_1 [\bA]^{\ell-1}[\bB][\bA]^{n-\ell}} \otimes \ket{ Y_1 [\bB]} =  \sum_{\ell=1}^n \ket{ Y_2 [\bB]} \otimes \ket{ X_2 [\bA]^{\ell-1}[\bB][\bA]^{n-\ell}} \ .
\end{equation}
But this cannot happen since, for example, $\ket{ X_1 [\bA]^{n-1}[\bB]} \ket{ Y_1 [\bB]}$ does not appear in the superposition of the right hand side. Therefore, it remains to compute the intersection of the middle terms from each subspace, each of which contains a single $\bB$ tensor. We wish to show
\begin{equation}
    \Span\left \{ \mathcal{S}_k(\bA)\otimes \mathcal{S}_1(\bB),
     \mathcal{S}^W_k(\bA, \bB)\otimes \mathcal{S}_1(\bA) \right \} \cap \Span\left \{ \mathcal{S}_1(\bA)\otimes \mathcal{S}^W_k(\bA, \bB) ,
     \mathcal{S}_1(\bB)\otimes \mathcal{S}_k(\bA) \right \} = \mathcal{S}^W_{k+1}(\bA, \bB)~,
\end{equation}
where the inclusion $\supseteq$ is clear. A vector in the intersection can be written in two ways:
\begin{align}\begin{split}
    v = & \ \ket{X_1 [\bA]^k}\ket{X_2\bB} + \sum_{\ell=1}^k\ket{X_3 [\bA]^{\ell-1}[\bB][\bA]^{k-\ell}}\ket{X_4 \bA}~, \\
    v = & \ \sum_{\ell=1}^k \ket{X_5\bA} \ket{X_6[\bA]^{\ell-1}[\bB][\bA]^{k-\ell}} + \ket{X_7\bB}\ket{X_8[\bA]^k}\ .
\end{split}\end{align}
Since $\mathcal{S}_1(\bA) \cap \mathcal{S}_1(\bB) = 0 $, the components in the linear superposition of $v$ where the tensor $\bB$ is in the same position must coincide. This means, for example, that there must be $X_1,X_2,X_5,X_6$ such that $\ket{X_1 [\bA]^k}\ket{X_2\bB} = \ket{X_5\bA} \ket{X_6[\bA]^{k-1}[\bB]}$, which implies that this vector is proportional to $\ket{X[\bA]^k[\bB]}$ for some $X\in \mathcal{M}_D$. Arguing in the same way for each position of $\bB$, we conclude that $v$ is in $\mathcal{S}^W_{k+1}(\bA, \bB)$.
\end{proof}

\subsubsection{Generalized domain wall superpositions}

Given MPS tensors $\bA$ and $\bB$ with the same bond dimension, consider the MPS tensor $\bC$ given by
\begin{equation}\label{gen-domain}
    C_i=\ket{0}(\bra{1}+\bra{0})\otimes A_i+\ket{1}\bra{1}\otimes B_i = \begin{pmatrix}
        A_i & A_i \\ 0 & B_i
    \end{pmatrix}~.
\end{equation}
The physical subspace generated by $\bC$ is
\begin{equation}
    \mathcal{S}_n(\bC) = \mathcal{S}_n(\bA) + \mathcal{S}_n(\bB) + \mathcal{S}^{DW}_n(\bA, \bB) \ , \quad \mathcal{S}^{DW}_n(\bA, \bB) = \left\{ \, \sum_{\ell=1}^{n-1}\ket{ X [\bA]^\ell[\bB]^{n-\ell}} \, \middle| \, X\in \mathcal{M}_D \,\right\}~.
\end{equation}

\begin{proposition}\label{GenDW}
    If $\bA$ is left $j_A$-stable, $\bB$ is right $j_B$-stable, and $\mathcal{S}_1(\bA) \cap \mathcal{S}_1(\bB) = \{0\} $, then $\mathcal{S}_k(\bC)$ satisfies intersection for $k\ge j_A+j_B+1$, and the periodic subspace is $\cS_n^P(\bC)=\mathcal{S}_n(\bA)\oplus\mathcal{S}_n(\bB)$.
\end{proposition}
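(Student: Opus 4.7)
The plan is to prove the intersection property sector-by-sector. Using $\mathcal{S}_1(\bA)\cap \mathcal{S}_1(\bB)=\{0\}$, I label each site as $A$ or $B$ according to which summand of $\mathcal{S}_1(\bA)\oplus \mathcal{S}_1(\bB)\subseteq\mathcal{H}$ it projects onto; this grades $\mathcal{H}^{\otimes (k+1)}$ by patterns in $\{A,B\}^{k+1}$. Since every element of $\mathcal{S}_k(\bC)=\mathcal{S}_k(\bA)+\mathcal{S}_k(\bB)+\mathcal{S}^{DW}_k(\bA,\bB)$ is supported on patterns of the form ``$A$'s then $B$'s,'' any state $|\Phi\rangle$ in the intersection is supported only on sectors $A^\ell B^{k+1-\ell}$, $0\le\ell\le k+1$, and the problem reduces to showing that each such sector-projection of the intersection coincides with the corresponding sector-projection of $\mathcal{S}_{k+1}(\bC)$.

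For the pure sectors ($\ell=0$ or $\ell=k+1$), the sector-projected representations reduce to the intersection problems for $\bB$ alone and $\bA$ alone, both resolved by \cref{PropIP} at length $k$ since the hypothesis $k\ge j_A+j_B+1$ implies $k\ge\max(j_A,j_B)+1$. For a mixed sector $A^\ell B^{k+1-\ell}$, $1\le\ell\le k$, the two sector-projections are $\{|X[\bA]^\ell[\bB]^{k-\ell}\rangle\otimes |Y[\bB]\rangle\}$ (from $\mathcal{S}_k(\bC)\otimes\mathcal{H}$) and $\{|Y[\bA]\rangle\otimes |X[\bA]^{\ell-1}[\bB]^{k+1-\ell}\rangle\}$ (from $\mathcal{H}\otimes\mathcal{S}_k(\bC)$); each contains the target $W_\ell:=\{|Z[\bA]^\ell[\bB]^{k+1-\ell}\rangle\mid Z\in\mathcal{M}_D\}$, which is the sector-projection of $\mathcal{S}_{k+1}(\bC)$, and I must show the intersection is no larger.

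This is the main obstacle. The hypothesis forces either $\ell\ge j_A+1$ or $k+1-\ell\ge j_B+1$; take the latter. Let $O^B$ be the right-stability operator of (the analogue of) \cref{prop:Opmove} acting on the last $j_B+1$ sites. Two observations close the argument. First, $O^B$ fixes every state in $\mathcal{S}_{j_B+1}(\bB)$: this is a direct consequence of \cref{prop:Opmove} together with the stability identity $VZ^B=V$ for $V\in \mathcal{V}_{j_B+1}(\bB)$, and it is the same calculation that lets $O$ fix $\mathcal{S}_{j+1}(\bA)$ in the proof of \cref{PropIP}. Writing $|\Phi\rangle$ in its rep-2 form places a single $\bB$-MPS chain on the last $j_B+1$ sites, so conditioned on any environment the state there lies in $\mathcal{S}_{j_B+1}(\bB)$, giving $(\mathds{1}\otimes O^B)|\Phi\rangle=|\Phi\rangle$. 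Second, writing $|\Phi\rangle$ in rep-1 form places two independent $\bB$-chains on those sites (lengths $j_B$ and $1$) with a tensor-product boundary between them, and there $O^B$ acts exactly as in \cref{prop:Opmove} to fuse the two chains into a single $\bB$-MPS, producing a state of the form $|Z[\bA]^\ell[\bB]^{k+1-\ell}\rangle\in W_\ell$. Equating the two computations of $(\mathds{1}\otimes O^B)|\Phi\rangle$ gives $|\Phi\rangle\in W_\ell$. The case $\ell\ge j_A+1$ is symmetric, using $O^A$ on the first $j_A+1$ sites.

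For the periodic-subspace claim, ``$\subseteq$'' follows from the sector decomposition since translation cyclically shifts the sector labels: a DW sector $A^\ell B^{n-\ell}$ with $1\le\ell\le n-1$ is mapped to $BA^\ell B^{n-\ell-1}$, which is not supported in $\mathcal{S}_n(\bC)$, so no translation-invariant state can have a DW component, whence $\mathcal{S}_n^P(\bC)\subseteq \mathcal{S}_n(\bA)\oplus \mathcal{S}_n(\bB)$. ``$\supseteq$'' follows from a parent-Hamiltonian comparison: $\mathcal{S}_\ell(\bA)\subseteq \mathcal{S}_\ell(\bC)$ gives $h(\bC,\ell)\le h(\bA,\ell)$ term by term, hence $\ker H_n^P(\bA,\ell)\subseteq\ker H_n^P(\bC,\ell)=\mathcal{S}_n^P(\bC)$ by the intersection property just proven together with \cref{int-parent}, and the analogous argument for $\bB$ completes the inclusion.
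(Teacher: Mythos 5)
Your reduction to sectors is where the argument breaks. From ``each sector-projection of the intersection lies in the corresponding sector-projection of $\mathcal{S}_{k+1}(\bC)$'' you can only conclude that the intersection is contained in $\mathcal{S}_{k+1}(\bA)+\mathcal{S}_{k+1}(\bB)+\sum_{\ell=1}^{k}W_\ell$, where $W_\ell=\{\ket{Z[\bA]^\ell[\bB]^{k+1-\ell}}\mid Z\in\cM_D\}$, and this space is in general strictly larger than $\mathcal{S}_{k+1}(\bC)$: membership in $\mathcal{S}^{DW}_{k+1}(\bA,\bB)$ requires one and the \emph{same} boundary matrix $X$ at every domain-wall position, while your sector-by-sector conclusion allows an independent $Z_\ell$ in each sector. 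In the paper's qubit domain-wall example ($\mathcal{S}_1(\bA)=\Span\{\ket{0}\}$, $\mathcal{S}_1(\bB)=\Span\{\ket{1}\}$, $j_A=j_B=1$, $k=3$) each $W_\ell$ is spanned by $\ket{0^\ell 1^{4-\ell}}$, so your bound permits the five-dimensional span of $\ket{0^4},\ket{1^4},\ket{0^31},\ket{0^21^2},\ket{01^3}$, whereas $\mathcal{S}_4(\bC)$ and the true intersection are three-dimensional (the relation $\alpha_3=\alpha_4$ in the paper's direct computation is exactly the cross-sector correlation your reduction discards); e.g.\ $\ket{0^k1}$ passes your sector test yet is not in the intersection. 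Your per-sector mechanism (apply $O^A$ or $O^B$, which fixes one representation and fuses the other) is the same mechanism the paper uses, but the missing step is the bookkeeping that glues the sectors back together: one must write the whole vector in the two forms with matrices that are common to all $\ell$ (the paper's Eq.~\eqref{eq:equicdec} decomposition with $Z_i,W_i$ and $V_j,U_j$ independent of $\ell$) and note that the matrix produced by \cref{prop:Opmove} depends only on the inserted defect ($N=\sum_iA_iMY_i$), not on the sector, so that all domain-wall components reassemble into a single element of $\mathcal{S}^{DW}_{k+1}(\bA,\bB)$. Without tracking this, the intersection property does not follow.

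On the periodic statement: your ``$\subseteq$'' argument via shifting sector labels is fine, but your ``$\supseteq$'' argument only yields $\mathcal{S}^P_n(\bA)+\mathcal{S}^P_n(\bB)\subseteq\mathcal{S}^P_n(\bC)$, since \cref{int-parent} gives $\ker H^P_n(\bA,\ell)=\mathcal{S}^P_n(\bA)$, which can be strictly smaller than $\mathcal{S}_n(\bA)$ (the W-state-with-momentum tensor with unquantized $p$ is left $1$-stable yet has $\mathcal{S}^P_n\subsetneq\mathcal{S}_n$). So the claimed equality $\mathcal{S}^P_n(\bC)=\mathcal{S}_n(\bA)\oplus\mathcal{S}_n(\bB)$ does not follow from your argument as written; you would additionally need translation invariance of $\mathcal{S}_n(\bA)$ and $\mathcal{S}_n(\bB)$ themselves (the paper's own proof is silent on this part of the statement, so this is a secondary point compared with the sector-gluing gap above).
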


\begin{proof}
We want to show that $(\mathcal{S}_k(\bC)\otimes \mathcal{S}_1(\bC)) \cap (\mathcal{S}_1(\bC)\otimes \mathcal{S}_k(\bC)) = \mathcal{S}_{k+1}(\bC)$, for any $k \ge j_A+j_B+1$. The inclusion $\supseteq$ is trivial, we thus only have to show the inclusion $\subseteq$. Note that
\begin{equation}
    \mathcal{S}_1(\bC) = \mathcal{S}_1(\bA) \oplus \mathcal{S}_1(\bB)~,
\end{equation}
and thus we are looking for the intersection of the following subspaces
\begin{align}\begin{split}
     \mathcal{S}_k(\bC)\otimes \mathcal{S}_1(\bC)= &\ \Span \{ \mathcal{S}_k^{\bA}\otimes \mathcal{S}_1^{\bA},
     \mathcal{S}_k^{\bB}\otimes \mathcal{S}_1^{\bB}  ,
     \mathcal{S}_k^{\bA}\otimes \mathcal{S}_1^{\bB} ,
     \mathcal{S}_k^{\bA \bB}\otimes \mathcal{S}_1^{\bB}  ,
     \mathcal{S}_k^{\bB}\otimes \mathcal{S}_1^{\bA} ,
     \mathcal{S}_k^{\bA \bB}\otimes \mathcal{S}_1^{\bA} \}
      \\
     \mathcal{S}_1(\bC)\otimes \mathcal{S}_k(\bC) =& \ \Span\{ \mathcal{S}_1^{\bA}\otimes \mathcal{S}_k^{\bA},
     \mathcal{S}_1^{\bB}\otimes \mathcal{S}_k^{\bB}  ,
     \mathcal{S}_1^{\bA}\otimes \mathcal{S}_k^{\bB} ,
     \mathcal{S}_1^{\bA}\otimes \mathcal{S}_k^{\bA \bB}  ,
     \mathcal{S}_1^{\bB}\otimes \mathcal{S}_k^{\bA} ,
     \mathcal{S}_1^{ \bB}\otimes \mathcal{S}_k^{\bA \bB} \},
\end{split}\end{align}
where we have used the shorthand $\mathcal{S}_k^{\bA} \equiv \mathcal{S}_k(\bA), \mathcal{S}_k^{\bB} \equiv \mathcal{S}_k(\bB), \mathcal{S}_k^{\bA \bB} \equiv \mathcal{S}^{DW}_k(\bA, \bB)$. Since $\cS_1(\bA)\cap\cS_1(\bB)$ is trivial by assumption, the first two terms of each subspace are orthogonal to the rest, and, due to the stability of $\bA$ and $\bB$, their intersections are $\mathcal{S}_{k+1}(\bA)$ and $\mathcal{S}_{k+1}(\bB)$. Moreover, the terms $\mathcal{S}_k^{\bB}\otimes \mathcal{S}_1^{\bA} $ and $\mathcal{S}_k^{\bA \bB}\otimes \mathcal{S}_1^{\bA}$ of the first subspace have $[\bA]$ tensors after $[\bB]$ tensors, and since this pattern does not occur in the other subspace and $\cS_1(\bA)\cap\cS_1(\bB)= 0$, these terms are not in the intersection. The same happens for the terms $\mathcal{S}_1^{\bB}\otimes \mathcal{S}_k^{\bA}$ and $ \mathcal{S}_1^{ \bB}\otimes \mathcal{S}_k^{\bA \bB}$ of the second subspace, with $\bB$ tensors before $\bA$ tensors. Therefore, it remains to compute the intersection of the two middle terms from each subspace. We wish to show
\begin{equation}
    \Span \{\mathcal{S}_k^{\bA}\otimes \mathcal{S}_1^{\bB} , \mathcal{S}_k^{\bA \bB}\otimes \mathcal{S}_1^{\bB}\} \cap  \Span \{ \mathcal{S}_1^{\bA}\otimes \mathcal{S}_k^{\bB},\mathcal{S}_1^{\bA}\otimes \mathcal{S}_k^{\bA \bB}  \} \ = \cS_{k+1}^{\bA\bB}~,
\end{equation}
where the inclusion $\supseteq$ is clear. A vector in the intersection can be written in two ways:
\begin{align}\begin{split}  
    v &=\sum_{X_1,Y_1}\alpha_{X_1,Y_1}
     \ket{ X_1 [\bA]^k}\ket{Y_1 [\bB]} + \sum_{X_2,Y_2}\beta_{X_2,Y_2} \sum_{\ell=1}^{k-1}\ket{ X_2 [\bA]^\ell[\bB]^{k-\ell}}\ket{Y_2 [\bB]} \\
     &\qquad\qquad=
     \sum_i \sum_{\ell=1}^k\tr[[\bA]^\ell [\bB]^{k-\ell} Z_i [\bB] W_i]~, \\
    v &=
     \sum_{X_3,Y_3}\gamma_{X_3,Y_3}
     \ket{ X_3 [\bA]}\ket{Y_3 [\bB]^k} + \sum_{X_4,Y_4}\delta_{X_4,Y_4}\sum_{\ell=1}^{k-1}\ket{X_4 [\bA]}\ket{ Y_4 [\bA]^\ell[\bB]^{k-\ell}} \\
     &\qquad\qquad=
     \sum_j \sum_{\ell=1}^k\tr[ V_j[\bA] U_j [\bA]^{\ell-1}[\bB]^{k-\ell+1} ]  \ ,
\end{split}\end{align}
where in the last equalities we have used the decomposition of Eq.~\eqref{eq:equicdec}. Using again that $\mathcal{S}_1(\bA) \cap \mathcal{S}_1(\bB) = 0$, we can compare terms with the same $\ell$. If $k\ge j_A+j_B+1$, then any $\ell\le k$ satisfies either $\ell-1\ge j_A$ or $k-\ell\ge j_B$ (or both). If $\ell-1\ge j_A$, since $\bA$ is left $j_A$-stable, we can use the operator $O$ of \cref{prop:Opmove} to move $U_j$ to the left of the chain. Similarly, if $k-\ell\ge j_B$, since $\bB$ is right $j_B$-stable, we can move $Z_i$ to the right of the chain. Either way $v$ has the form of a state in $\mathcal{S}^{DW}_{k+1}(\bA, \bB)$.

\end{proof}

We note that the family of tensors of the form
\begin{equation}    \begin{pmatrix}A_i&B_i\\0&C_i
\end{pmatrix}
\end{equation}
results in superpositions of domain walls between $\bA$ and $\bC$ with a defect tensor $\bB$ between them, as
\begin{equation}
    \begin{pmatrix}A_i&B_i\\0&C_i
    \end{pmatrix}^n =\begin{pmatrix}A^n&\sum_{0\le p \le n-1}A^{n-1-p}B C^p\\0&C^n\end{pmatrix}~.
\end{equation}
Therefore, this family includes the generalized W state tensor \eqref{gen-W} (taking $\bC=\bA$) and the domain wall superposition tensor \eqref{gen-domain} (taking $\bC=\bB$). Superpositions of domain walls between $\bA \in \mathcal{M}_{D_A} \otimes \mathcal{H}$ and $\bB \in \mathcal{M}_{D_B} \otimes \mathcal{H}$ with virtual defects $W \in \mathcal{M}_{D_A \times D_B}$ in between are obtained by $\bB=W\tilde{\bB}, \bC=\tilde{\bB}$.

\subsection{Generalized intersection property}
\label{counterex}

Finally we discuss an example the fails the intersection property yet satisfies a more general condition that also ensures the ground space is well-behaved.

Consider the MPS tensor
\begin{equation}
    A_0 = \begin{pmatrix} 1 & 1  \\ 0 & 0 \end{pmatrix}~,\qquad A_1 = \begin{pmatrix} 0 & 1  \\ 0 & 0 \end{pmatrix}~.
\end{equation}
As the MPS matrices satisfy 
\begin{equation}
A_0^2 = A_0, \quad A_1^2 = 0, \quad A_1A_0 = 0, \quad \text{and} \quad A_0A_1 = A_1,     
\end{equation}
this tensor has $\cV_j=\Span\{A_0,A_1\}$ for all $j$. It is left $1$-stable, for example with $Z=Y_0=A_0$ and $Y_1=0$. Therefore it satisfies the intersection property at lengths $k\geq 2$. Note that this tensor is non-normal and thus is an example of how stability generalizes injectivity. One can also check the intersection property directly by calculating $\mathcal{S}_k(\bA) = \Span \left\{ \ket{0\cdots 00},\ket{0\cdots01} \right \}$.

Now consider the transpose tensor $B_i = (A_i)^T$, which has $\mathcal{S}_k(\bB) = \Span\left\{ \ket{10\cdots 0},\ket{00\cdots 0} \right\}$ and satisfies the intersection property too. While $\bA$ is left $1$-stable, its transpose $\bB$ is \emph{right} $1$-stable. In general, the transpose of a left (right) stable tensor is right (left) stable, with matrices $Y_i^T$.

While the direct sum of two left (right) stable tensors is also left (right) stable, there is no guarantee of stability for the sum of a left stable tensor with a right stable tensor. In fact, the direct sum tensor $C_i = A_i \oplus B_i$ is neither left nor right stable, as it does not satisfy the intersection property. To see this, note that $\mathcal{S}_k(\bC) = \Span \left\{ \ket{0^{k}},\ket{0^{k-1}1},\ket{10^{k-1}} \right \}$, which satisfies
\begin{equation}
   (\mathcal{S}_k(\bC)\otimes \mathcal{H}) \cap (\mathcal{H}\otimes \mathcal{S}_k(\bC)) = \Span \left\{\ket{0^{k+1}}, \ket{0^{k}1},\ket{10^{k}}, \ket{10^{k-1}1} \right \} \equiv \mathcal{S}_{k+1}(\mathbf{D}) \neq \mathcal{S}_{k+1}(\bC)~, 
\end{equation}
where
\begin{equation}
    D_0 =\begin{pmatrix}
        0 & 0 & 0 & 0 \\
        1 & 1 & 0 & 0 \\
        0 & 0 & 1 & 0 \\
        0 & 0 & 0 & 0        
    \end{pmatrix} \ , \quad 
    D_1 =\begin{pmatrix}
        0 & 0 & 0 & 0 \\
        0 & 0 & 0 & 0 \\
        1 & 0 & 0 & 0 \\
        0 & 1 & 1 & 0        
    \end{pmatrix} \ .
\end{equation}
On the other hand, the new tensor $\mathbf{D}$ satisfies the usual intersection property.

If the OBC parent Hamiltonian of $\bC$ is frustration free, its ground space is $\mathcal{S}_{n}(\mathbf{D})$ since the intersection of the spaces $\mathcal{S}_k(\bC)$ results in $\mathcal{S}_{k+1}(\mathbf{D})$, that is:
\begin{equation}
    \ker H_n(\bC,\ell)=\bigcap_{i=0}^{n-\ell}\left(\cH^{\otimes i}\otimes\cS_\ell(\bC)\otimes\cH^{\otimes n-\ell-i}\right)=\bigcap_{i=0}^{n-(\ell+1)}\left(\cH^{\otimes i}\otimes\cS_{\ell+1}(\mathbf{D})\otimes\cH^{\otimes n-\ell-i}\right)=\cS_n(\mathbf{D})~.
\end{equation}

This example motivates us to define the following notion, which ensures that
$GS(H_n(\bA,\ell))=\mathcal{S}_n(\bB)$ in the frustration free case, for interaction lengths $\ell>k$:
\begin{definition}\label{defgenintprop}[Generalized intersection property] An MPS tensor $\bB \in\mathcal{H}\otimes \mathcal{M}_{D_B}$ is said to generalize the intersection property of the MPS tensor $\bA \in\mathcal{H}\otimes \mathcal{M}_{D_A}$ on $k$ sites if
\begin{equation}
    (\mathcal{S}_k(\bA)\otimes \mathcal{H}) \cap (\mathcal{H}\otimes \mathcal{S}_k(\bA)) = \mathcal{S}_{k+1}(\bB) \  ,   
\end{equation}
and $\bB$ satisfies the intersection property for all $j>k$.
\end{definition}

\section{Conclusion and outlook}

In this work we have discovered a condition that ensures that the parent Hamiltonian of an MPS has only the MPS as ground space. This condition, called stability, generalizes block injectivity. It is satisfied by important families of states like the W state, Dicke states, and domain wall superpositions.

Our investigation poses several questions for future study. One question concerns alternative formulations of the stability condition. With the current definition, it can be difficult to show that a tensor is not stable by ruling out the existence of matrices $Y_i$. The normality condition can be alternatively defined as the absence of an invariant subspace of the $A_i$ and the non-degeneracy of the highest eigenvalue of the transfer matrix \cite{PerezGarcia07}, and it would be interesting to similarly reformulate the stability condition in terms of transfer matrices. A related question is whether the ground spaces of stable MPS parent Hamiltonians can be determined from the transfer matrix spectrum, as they can in the block-injective case. A reformulation of stability might also lead to a version of the quantum Wielandt theorem \cite{Sanz10} for stability; that is, an upper bound on the stability length given that a tensor is stable at all.

Another question concerns the physics of MPS built out of stable tensors. Injective MPS display particular entanglement patterns such as exponentially decaying correlations. It would be interesting to study which physical features are shared by stable MPS.

We also leave for future work the study of the generalized intersection property of \cref{defgenintprop}. As demonstrated by the constructions of \cref{sec:examples-nonstable}, stability is a sufficient but not necessary condition for intersection. The AKLT tensor also exemplifies this: it satisfies intersection on $k=2$ sites yet is only $2$-stable, not $1$-stable. Indeed, a straightforward numerical check shows that the AKLT tensor has no $2$-site operator $O$ satisfying \cref{prop:Opmove}. It thus remains an open question to explain why certain normal (stable) tensors satisfy the intersection property at or below the injectivity (stability) length. It would be interesting to search for a property of the local MPS tensor that is equivalent to the intersection property or (more generally) to bounded ground state degeneracy of the parent Hamiltonians.

\section*{Acknowledgments}

This research has been funded in part by the European Union’s Horizon 2020 research and innovation program through Grant No.\ 863476 (ERC-CoG SEQUAM). J.G.R. acknowledges funding by the FWF Erwin Schrödinger Program (Grant DOI 10.55776/J4796). Research at the Perimeter Institute is supported in part by the Government of Canada through the Department of Innovation, Science and Economic Development and by the Province of Ontario through the Ministry of Colleges and Universities.


\printbibliography

\end{document}